\pgfplotsset{compat=1.9}
\newtheorem{theorem}{Theorem}
\newtheorem{definition}{Definition}
\title{Reward Machines for Cooperative \\Multi-Agent Reinforcement Learning}
\author{Cyrus Neary}
\affiliation{
  \institution{The University of Texas at Austin}
  }
\email{cneary@utexas.edu}
\author{Zhe Xu}
\affiliation{
  \institution{Arizona State University}
  }
\email{xzhe1@asu.edu}
\author{Bo Wu}
\affiliation{
  \institution{The University of Texas at Austin}
  }
\email{bowu86@gmail.com}
\author{Ufuk Topcu}
\affiliation{
  \institution{The University of Texas at Austin}
  }
\email{utopcu@utexas.edu}
\begin{abstract}
In cooperative multi-agent reinforcement learning, a collection of agents learns to interact in a shared environment to achieve a common goal. We propose the use of reward machines (RM) --- Mealy machines  used as structured representations of reward functions --- to encode the team's task. The proposed novel interpretation of RMs in the multi-agent setting explicitly encodes required teammate interdependencies, allowing the team-level task to be decomposed into sub-tasks for individual agents. We define such a notion of RM decomposition and present algorithmically verifiable conditions guaranteeing that distributed completion of the sub-tasks leads to team behavior accomplishing the original task. This framework for task decomposition provides a natural approach to decentralized learning: agents may learn to accomplish their sub-tasks while observing only their local state and abstracted representations of their teammates. We accordingly propose a decentralized q-learning algorithm. Furthermore, in the case of undiscounted rewards, we use local value functions to derive lower and upper bounds for the global value function corresponding to the team task. Experimental results in three discrete settings exemplify the effectiveness of the proposed RM decomposition approach, which converges to a successful team policy an order of magnitude faster than a centralized learner and significantly outperforms hierarchical and independent q-learning approaches.
\end{abstract}
\keywords{Decentralized Multi-Agent Learning; Discrete Event Systems; Task Decomposition; Bisimulation}
\newcommand{\BibTeX}{\rm B\kern-.05em{\sc i\kern-.025em b}\kern-.08em\TeX}
\begin{document}

\newcommand{\Prob}{P}

\newcommand{\metaStates}{S_M}
\newcommand{\commonMetaState}{s_M}

\newcommand{\Goal}{Goal}
\newcommand{\GoalReached}{g}
\newcommand{\RendezvousLocation}{M}
\newcommand{\Rendezvous}{r}
\newcommand{\LeaveRendezvous}{l}

\newcommand{\drone}{D}
\newcommand{\groundTeam}{GT}
\newcommand{\detection}{T}
\newcommand{\droneMove}{M}
\newcommand{\groundTeamMove}{GT}
\newcommand{\noTarget}{N}

\newcommand{\argmin}{\mathrm{arg}\min}
\newcommand{\argmax}{\mathrm{arg}\max}
\newcommand{\argsup}{\mathrm{arg}\sup}
\newcommand{\arginf}{\mathrm{arg}\inf}
\newcommand{\norm}[1]{\left\lVert#1\right\rVert}

\newcommand{\setOfQFunctions}{Q}
\newcommand{\setOfQFunctionsNew}{Q_{\textrm{new}}}            

\newcommand{\set}[1]{\{ #1 \}}
\newcommand{\reals}{\mathbb{R}}

\newcommand{\tabincell}[2]{\begin{tabular}{@{}#1@{}}#2\end{tabular}}

\newcommand{\policy}{\pi}
\newcommand{\JointPolicy}{\boldsymbol{\pi}}
\newcommand{\qValue}{q}
\newcommand{\qValueNew}{q_{\textrm{new}}}
\newcommand{\commonReward}{r}
\newcommand{\maxLengthEpisode}{m}
\newcommand{\learningRate}{\alpha}

\newcommand{\init}{I}

\newcommand{\dfaStates}{V}
\newcommand{\dfaCommonState}{v}
\newcommand{\dfaInitialState}{\dfaCommonState_\init}
\newcommand{\dfaTransition}{\delta}
\newcommand{\dfaInputAlphabet}{E}
\newcommand{\dfaCommonEvent}{e}
\newcommand{\dfaWord}{\omega}
\newcommand{\dfaFinalStates}{\dfaStates_F}
\newcommand{\dfa}{\mathfrak A}
\newcommand{\word}{\omega}
\newcommand{\dfaSymbol}{\ensuremath{b}}

\newcommand{\mdp}{\mathcal{M}}
\newcommand{\mdpStates}{S}
\newcommand{\mdpCommonState}{s}
\newcommand{\mdpCommonStateRV}{S}
\newcommand{\mdpInitialState}{\mdpCommonState_\init}
\newcommand{\mdpActions}{A}
\newcommand{\mdpCommonActionRV}{A}
\newcommand{\mdpCommonAction}{a}
\newcommand{\mdpTransition}{p}
\newcommand{\mdpRewardFunction}{r}
\newcommand{\mdpCommonRewardRV}{R}
\newcommand{\mdpDiscount}{\gamma}
\newcommand{\mdpTrajectory}{\zeta}
\newcommand{\mdpLabel}{\ell}
\newcommand{\mdpCommonReward}{r}
\newcommand{\trajectory}[1]{\ensuremath{\mdpCommonState_0 \mdpCommonAction_0\ldots \mdpCommonState_#1 \mdpCommonAction_#1 \mdpCommonState_{#1 + 1}}}
\newcommand{\mdpValue}{V}

\newcommand{\SG}{\mathcal{G}}
\newcommand{\SGJointStates}{\mathcal{S}}
\newcommand{\SGCommonJointState}{\mathbf{s}}
\newcommand{\SGStates}{\mdpStates}
\newcommand{\SGCommonState}{s}
\newcommand{\SGRewardFunction}{R}
\newcommand{\SGDiscount}{\gamma}
\newcommand{\SGActions}{A}
\newcommand{\SGCommonAction}{a}
\newcommand{\SGJointActions}{\mathcal{A}}
\newcommand{\SGCommonJointAction}{\mathbf{a}}
\newcommand{\SGTransition}{p}
\newcommand{\SGLocalTransition}{p}
\newcommand{\SGAgentSet}{\mathcal{N}}
\newcommand{\SGNumAgents}{N}

\newcommand{\distribution}{\Delta}

\newcommand{\possibleTeamStates}{B}
\newcommand{\eventAgentIndex}{I_{\RMCommonEvent}}

\newcommand{\RM}{\mathcal{R}}
\newcommand{\RMStates}{U}
\newcommand{\RMCommonState}{u}
\newcommand{\RMInitialState}{\RMCommonState_I}
\newcommand{\RMLabels}{\RMEvents^*}
\newcommand{\RMPropositions}{\mathcal{P}}
\newcommand{\RMLabelingFunction}{L}
\newcommand{\RMCommonLabel}{l}
\newcommand{\RMTransition}{\delta}
\newcommand{\RMOutput}{\sigma}
\newcommand{\RMCommonReward}{r}
\newcommand{\RMFinalStates}{F}
\newcommand{\RMEvents}{\Sigma}
\newcommand{\RMCommonEvent}{e}
\newcommand{\RMString}{l}
\newcommand{\RMCommonInput}{l}
\newcommand{\EmptyString}{\varepsilon}
\newcommand{\Projection}{P}
\newcommand{\RMEventSequence}{\xi}

\newcommand{\qValueAgenti}{\qValue_{\RMCommonState^i}}

\newcommand{\SGStateSubset}{B}

\newcommand{\LabelingFunctionOutput}{2^\RMEvents}

\newcommand{\localLabelingFunctionOutput}{2^{\RMEvents_i}}
\newcommand{\commonLocalLabelingFunctionOutput}{l}
\newcommand{\syncOutput}{\Tilde{\commonLocalLabelingFunctionOutput}}

\newcommand{\machine}{\mathcal A}
\newcommand{\mealyStates}{V}

\newcommand{\mealyCommonState}{v}
\newcommand{\mealyCommonInput}{u}        
\newcommand{\mealyInputAlphabet}{\Sigma}
\newcommand{\mealyInit}{{\mealyCommonState_\init}}
\newcommand{\mealyInitHat}{{\hat{\mealyCommonState}_\init}}
\newcommand{\mealyOutputAlphabet}{\ensuremath{\mathbb{R}}}
\newcommand{\mealyOutput}{\sigma}
\newcommand{\mealyTransition}{\delta}
\newcommand{\inputTrace}{\lambda}
\newcommand{\outputTrace}{\rho}

\newcommand{\mealyOutputNew}{\mealyOutput_{\textrm{new}}}
\newcommand{\mealyTransitionNew}{\mealyTransition_{\textrm{new}}}
\newcommand{\mealyStatesNew}{\mealyStates_{\textrm{new}}}
\newcommand{\mealyCommonStateNew}{\mealyCommonState_{\textrm{new}}}
\newcommand{\mealyInitNew}{\mealyInit_{\textrm{new}}}

\newcommand{\Relation}{\sim}
\newcommand{\projectionRelation}{\Relation_{\RMEvents_i}}

\newcommand{\episode}{n}
\newcommand{\LINEFORALL}[3][default]{%
  \ALC\algorithmicforall\ #2\ \algorithmicdo%
  \ALC\ #3\ \algorithmicendfor%
}

\newcommand{\commFunction}{C}
\newcommand{\buttonSize}{3.5pt}

\newcommand{\buttonsAgentOne}{A_1}
\newcommand{\buttonsAgentTwo}{A_2}
\newcommand{\buttonsAgentThree}{A_3}

\newcommand{\redButton}{R_B}
\newcommand{\yellowButton}{Y_B}
\newcommand{\greenButton}{G_B}
\newcommand{\redTile}{R_T}
\newcommand{\yellowTile}{Y_T}
\newcommand{\greenTile}{G_T}
\newcommand{\agentTwoRedButton}{\buttonsAgentTwo^{\redButton}}
\newcommand{\agentTwoNotRedButton}{\buttonsAgentTwo^{\lnot \redButton}}
\newcommand{\agentThreeRedButton}{\buttonsAgentThree^{\redButton}}
\newcommand{\agentThreeNotRedButton}{\buttonsAgentThree^{\lnot \redButton}}
\newcommand{\goal}{Goal}

\pagestyle{fancy}
\fancyhead{}


\maketitle 


\section{Introduction}

In multi-agent reinforcement learning (MARL), a collection of agents learn to maximize expected long-term return through interactions with each other and with a shared environment. We study MARL in a cooperative setting: all of the agents are rewarded collectively for achieving a team task. 

Two challenges inherent to MARL are coordination and non-stationarity. Firstly, coordination is needed between agents because the correctness of any individual agent's actions may depend on the actions of its teammates \cite{boutilier1996planning, guestrin2002coordinated}. Secondly, the agents are learning and updating their behaviors simultaneously. From the point of view of any individual agent, the learning problem is non-stationary; the best solution for any individual is constantly changing \cite{non_stationary_MARL_survey}. 

A reward machine (RM) is a Mealy machine used to define tasks and behaviors dependent on abstracted descriptions of the environment \cite{icarte2018using}. Intuitively, RMs allow agents to separate tasks into stages and to learn different sets of behaviors for the different portions of the overall task. In this work, we use RMs to describe cooperative tasks and we introduce a notion of RM decomposition for the MARL problem. The proposed use of RMs explicitly encodes the information available to each agent, as well as the teammate communications necessary for successful cooperative behavior. The global (cooperative) task can then be decomposed into a collection of new RMs, each encoding a sub-task for an individual agent. We propose a decentralized learning algorithm that trains the agents individually using these sub-task RMs, effectively reducing the team's task to a collection of single-agent reinforcement learning problems. The algorithm assumes each agent may only observe its own local state and the information encoded in its sub-task RM.

Furthermore, we provide conditions guaranteeing that if each agent accomplishes its sub-task, the corresponding joint behavior accomplishes the team task. Finally, decomposition of the team's task allows for each agent to be trained independently of its teammates and thus addresses the problems posed by non-stationarity. Individual agents  condition their actions on abstractions of their teammates, eliminating the need for simultaneous learning. 

Experimental results in three discrete domains exemplify the strengths of the proposed decentralized algorithm. In a two-agent rendezvous task, the proposed algorithm converges to successful team behavior more than an order of magnitude faster than a centralized learner and performs roughly on par with hierarchical independent learners (h-IL) \cite{hierarchicalDeepMARL}. In a ten-agent variant of the task, the proposed algorithm quickly learns effective team behavior while neither h-IL nor independent q-learning (IQL) \cite{MARLIndependentvsCooperativeAgentsTan} converges to policies completing the task within the allowed training steps. 
\section{Preliminaries}

A Markov decision process (MDP) is a tuple $\mdp = \langle \mdpStates, \mdpActions, \mdpRewardFunction, \mdpTransition, \mdpDiscount \rangle$ consisting of a finite set of states $\mdpStates$, a finite set of actions $\mdpActions$, a reward function $\mdpRewardFunction : \mdpStates \times \mdpActions \times \mdpStates \rightarrow \mathbb{R}$, a transition probability function $\mdpTransition : \mdpStates \times \mdpActions \rightarrow \distribution(\mdpStates)$, and a discount factor $\mdpDiscount \in (0,1]$. Here $\distribution(\mdpStates)$ is the set of all probability distributions over 
$\mdpStates$. We denote by $\mdpTransition(\mdpCommonState' | \mdpCommonState, \mdpCommonAction)$ the probability of transitioning to state $\mdpCommonState'$ from state $\mdpCommonState$ under action $\mdpCommonAction$. A stationary policy $\policy : \mdpStates \rightarrow \distribution(\mdpActions)$ maps states to probability distributions over the set of actions.  In particular, if an agent is in state $\mdpCommonState_t \in \mdpStates$ at time step $t$ and is following policy $\policy$, then $\policy(\mdpCommonAction_t | \mdpCommonState_t)$ denotes its probability of taking action $\mdpCommonAction_t \in \mdpActions$. 

The goal of reinforcement learning (RL) is to learn an optimal policy $\policy^*$ maximizing the expected sum of discounted future rewards from any state \cite{sutton2018reinforcement}. The q-function for policy $\policy$ is defined as the expected discounted future reward that results from taking action $\mdpCommonAction$ from state $\mdpCommonState$ and following policy $\policy$ thereafter. Tabular q-learning \cite{watkins1992q}, an RL algorithm, uses the experience $\{(\mdpCommonState_t, \mdpCommonAction_t, \mdpCommonReward_t, \mdpCommonState_{t+1})\}_{t\in \mathbb{N}_0}$ of an agent interacting with an MDP to learn the q-function $\qValue^*(\mdpCommonState, \mdpCommonAction)$ corresponding to an optimal policy $\policy^*$. Given $\qValue^*(\mdpCommonState, \mdpCommonAction)$, the optimal policy may be recovered.

A common framework used to extend RL to a multi-agent setting is the Markov game \cite{littman1994markov,boutilier1996planning}. 
A cooperative Markov game of $\SGNumAgents$ agents is a tuple $\SG = \langle \SGStates_1, ..., \SGStates_\SGNumAgents, \SGActions_1, ..., \SGActions_{\SGNumAgents}, \SGTransition, \SGRewardFunction, \SGDiscount \rangle$. $\SGStates_i$ and $\SGActions_i$ are the finite sets of agent $i$'s local states and actions respectively. We define the set of joint states as $\SGJointStates = \SGStates_1 \times ... \times \SGStates_N$ and we similarly define the set of joint actions to be $\SGJointActions = \SGActions_1 \times ... \times \SGActions_N$. $\SGTransition : \SGJointStates \times \SGJointActions \rightarrow \distribution(\SGJointStates)$ is a joint state transition probability distribution. $\SGRewardFunction : \SGJointStates \times \SGJointActions \times \SGJointStates \rightarrow \mathbb{R}$ is the team's collective reward function which is shared by all agents, and $\SGDiscount \in (0,1]$ is a discount factor.

In this work, we assume the dynamics of each agent are independently governed by local transition probability functions $\SGLocalTransition_i : \SGStates_i \times \SGActions_i \rightarrow \distribution(\SGStates_i)$. The joint transition function is then constructed as $\SGTransition(\SGCommonJointState' | \SGCommonJointState, \SGCommonJointAction) = \Pi_{i=1}^\SGNumAgents \SGLocalTransition_i(\SGCommonState_i' | \SGCommonState_i, \SGCommonAction_i)$, for all $\SGCommonJointState$, $\SGCommonJointState' \in \SGJointStates$ and $\SGCommonJointAction \in \SGJointActions$. A team policy is defined as $\JointPolicy : \SGJointStates \rightarrow \distribution(\SGJointActions)$. 
As in the single agent case, the objective of team MARL is to find a team policy $\JointPolicy^*$ maximizing expected discounted future reward from any joint state.
\section{Reward Machines for MARL}
\label{sec:reward_machines_for_MARL}

To introduce reward machines (RM) and to illustrate how they may be used to encode a team's task, we consider the example shown in Figure \ref{fig:buttons_gridworld}. Three agents, denoted $\buttonsAgentOne$, $\buttonsAgentTwo$, and $\buttonsAgentThree$, operate in a shared environment with the objective of allowing $\buttonsAgentOne$ to reach the goal location denoted $\Goal$. However, the red, yellow, and green colored regions are blocking the paths of agents $\buttonsAgentOne$, $\buttonsAgentTwo$, and $\buttonsAgentThree$ respectively. To allow the agents to cross these colored regions, the button of the corresponding color must be pressed first. Furthermore, the yellow and green buttons may be pressed by an individual agent, but the red button requires two agents to simultaneously occupy the button's location before it is activated. The dashed and numbered arrows in the figure illustrate the sequence of events necessary for task completion: $\buttonsAgentOne$ should push the yellow button allowing $\buttonsAgentTwo$ to proceed to the green button, which is necessary for $\buttonsAgentThree$ to join $\buttonsAgentTwo$ in pressing the red button, finally allowing $\buttonsAgentOne$ to cross the red region and reach $\Goal$.

\begin{figure*}[t!]
    \centering
    \begin{subfigure}[t]{0.3\textwidth}
        \centering
        \resizebox{0.8\textwidth}{!}{

\begin{tikzpicture}[scale=1.0]

\filldraw[fill=black!0!white] (0,0) rectangle (10,10);
\filldraw[fill=black, draw=black] (3,2) rectangle (4,10);
\filldraw[fill=black, draw=black] (4,2) rectangle (10,3);
\filldraw[fill=black, draw=black] (7,5) rectangle (8,10);
\filldraw[fill=yellow, opacity=0.6, draw=black] (4,6) rectangle (7,8);
\filldraw[fill=green, opacity=0.6, draw=black] (8,6) rectangle (10,8);
\filldraw[fill=red, opacity=0.6, draw=black] (5,0) rectangle (8,2);


\filldraw[fill=yellow, opacity=0.6, draw=black] (2.3, 9.3) circle (0.5cm);
\filldraw[fill=green, opacity=0.6, draw=black] (6.5, 4.5) circle (0.5cm);
\filldraw[fill=red, opacity=0.6, draw=black] (9.3, 3.7) circle (0.5cm);
\node at (2.3, 9.3) {\Huge $\yellowButton$};
\node at (6.5, 4.5) {\Huge $\greenButton$};
\node at (9.3, 3.7) {\Huge $\redButton$};

\node[anchor=center] at (9.0, 1.5) (g) {\Huge $\goal$};
\node[anchor=center] at (0.5, 9.3) (a1) {\Huge $A_1$};
\node[anchor=center] at (5.5, 9.5) (a2) {\Huge $A_2$};
\node[anchor=center] at (8.5, 9.5) (a3) {\Huge $A_3$};

\draw[draw=black, very thick] (0,0) rectangle (10,10);
\path[draw, dashed, ->, ultra thick] (a1.0) -- node[below]{\Huge 1} (1.8, 9.3);
\path[draw, dashed, ultra thick] (a2.270) -- node [left, near end]{\Huge 2} (5.5, 4.5);
\path[draw, dashed, ->, ultra thick] (5.5,4.5) -- (6, 4.5);
\path[draw, dashed, ultra thick] (6.5, 4) -- (6.5, 3.7);
\path[draw, dashed,->, ultra thick] (6.5, 3.7) -- node[above]{\Huge 3} (8.8, 3.7);
\path[draw, dashed, ultra thick] (9, 9.5) -- (9.3, 9.5);
\path[draw, dashed, -> , ultra thick] (9.3, 9.5) -- node[left, near end]{\Huge 3} (9.3, 4.2);
\path[draw, dashed, ultra thick] (2.3, 8.8) -- node[left]{\Huge 4} (2.3, 1.5);
\path[draw, dashed, ->, ultra thick] (2.3, 1.5) -- (8.3, 1.5);

\end{tikzpicture}
}
        \caption{Cooperative buttons domain.}
        \label{fig:buttons_gridworld}
    \end{subfigure}%
    ~
    \begin{subfigure}[t]{0.7\textwidth}
        \centering
        \tikzset{auto,
        ->,
        >=stealth,
        node distance=2.2cm,
        every node/.style={scale=0.85, minimum size=0pt, inner sep=0pt}}

\newcommand{\blockOpacity}{15}
\newcommand{\minBlockHeight}{0.5cm}
\newcommand{\tileDim}{0.5cm}

\usetikzlibrary{shapes.geometric}

\newcommand{\hDist}{1.5cm}
\newcommand{\vDist}{1.0cm}

\resizebox{0.9\textwidth}{!}{

\begin{tikzpicture}
    \node[state, initial above] (ui) {$\RMInitialState$};
    \path (ui.0)+(\hDist, 0.0cm) node (u1) [state] {$\RMCommonState_1$};
    \path (u1.0)+(\hDist, 0.0cm) node (u2) [state] {$\RMCommonState_2$};
    \path (u2.90)+(\hDist, \vDist) node (u3) [state] {$\RMCommonState_3$};
    \path (u2.270)+(\hDist, -\vDist) node (u4) [state] {$\RMCommonState_4$};
    \path (u3.270)+(\hDist, -\vDist) node (u5) [state] {$\RMCommonState_5$};
    \path (u5.0)+(\hDist, 0.0cm) node (u6) [state] {$\RMCommonState_6$};
    \path (u6.270)+(0.0cm, -\vDist) node (u7) [state, accepting] {$\RMCommonState_7$};
    
    \draw (ui) -> node[above, yshift=0.1cm] {$\yellowButton$} (u1);
    \draw (u1) edge node[above, yshift=0.1cm] {$\greenButton$} (u2);
    \draw (u2) edge node[below, xshift=0.2cm] {$\agentTwoRedButton$} (u3);
    \draw (u2) edge node[above, xshift=0.2cm] {$\agentThreeRedButton$} (u4);
    \draw (u3) edge node[below, xshift=-0.2cm] {$\agentThreeRedButton$} (u5);
    \draw (u4) edge node[above, xshift=-0.2cm] {$\agentTwoRedButton$} (u5);
    \draw (u3) edge[bend right] node[left, xshift=-0.1cm] {$\agentTwoNotRedButton$} (u2);
    \draw (u4) edge[bend left] node[left, xshift=-0.1cm] {$\agentThreeNotRedButton$} (u2);
    \draw (u5) edge node[above, yshift=0.1cm] {$\redButton$} (u6);
    \draw (u5) edge[bend right] node[right, xshift=0.15cm] {$\agentThreeNotRedButton$} (u3);
    \draw (u5) edge[bend left] node[right, xshift=0.1cm] {$\agentTwoNotRedButton$} (u4);
    \draw (u6) edge node[left, xshift=-0.1cm] {$\goal$} (u7);
    
    
    
    
\end{tikzpicture}}
        \caption{Reward machine encoding the cooperative buttons task.}
        \label{fig:buttons_rm}
    \end{subfigure}
    \caption{The multi-agent buttons task. In Figure (a), the colored circles denote the locations of the buttons, the thick black areas are walls the agents cannot cross, and the numbered dotted lines show the order of high-level steps necessary to complete the task. The set of events of the RM in (b) is $\RMEvents = \{ \yellowButton, \greenButton, \redButton, \agentTwoRedButton, \agentTwoNotRedButton, \agentThreeRedButton, \agentThreeNotRedButton, \goal \}$.}
    \label{fig:buttons_task_description}
\end{figure*}
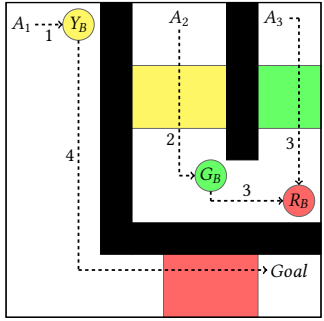
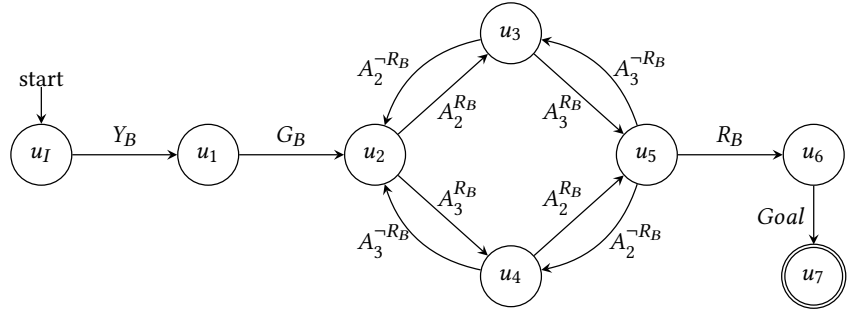


\subsection{Reward Machines for Task Description}
\label{sec:reward_machines}

\begin{definition}
\label{def:rewardMachine}
A \emph{reward machine (RM)} 
$\RM = \langle \RMStates, \RMInitialState, \RMEvents, \RMTransition, \RMOutput \rangle$ consists of 
a finite, nonempty set $\RMStates$ of states, 
an initial state $\RMInitialState \in \RMStates$, 
a finite set $\RMEvents$ of environment events,
a transition function $\RMTransition \colon \RMStates \times \RMEvents \to \RMStates$, 
and an output function $\RMOutput \colon \RMStates \times \RMStates \to \mathbb{R}$.
\end{definition}

Reward machines are a type of Mealy machine used to define temporally extended tasks and behaviors. Initially introduced in \cite{icarte2018using}, we adapt the RM's definition slightly to better suit the multi-agent setting. In particular, we define the transition function to take events $\RMCommonEvent \in \RMEvents$ as input, instead of collections of atomic propositions. This change simplifies the notation required for task decomposition and it is relatively minor; any collection of atomic propositions can be redefined as an individual event $\RMCommonEvent$. Furthermore, the output function $\RMOutput$ is defined to map transitions to constant values, instead of to reward functions. Finally, we note that $\RMTransition$ and $\RMOutput$ are partial functions; they are defined on subsets of $\RMStates \times \RMEvents$ and $\RMStates \times \RMStates$ respectively. Figure \ref{fig:buttons_rm} illustrates the RM encoding the buttons task. 

Set $\RMEvents$ is the collection of all the high-level events necessary to describe the team's task. For example, the event set corresponding to the buttons task from Figure \ref{fig:buttons_gridworld} is \(\RMEvents = \{ \yellowButton\), \(\greenButton\), \(\agentTwoRedButton\), \(\agentTwoNotRedButton\), \(\agentThreeRedButton\), \(\agentThreeNotRedButton\), \(\redButton\), \(\goal \}\).  Here, events \(\yellowButton\), \(\greenButton\), and \(\redButton\) correspond to the yellow, green, or red button being pressed, respectively. Because both agents $\buttonsAgentTwo$ and $\buttonsAgentThree$ must simultaneously stand on the red button for the event $\redButton$ to occur, we additionally include the events $\agentTwoRedButton, \agentTwoNotRedButton, \agentThreeRedButton, \agentThreeNotRedButton$ in $\RMEvents$, which represent $\buttonsAgentTwo$ or $\buttonsAgentThree$ individually either pressing or not pressing the red button. The event \(\goal\) corresponds to agent \(\buttonsAgentOne\) successfully reaching the goal location.

The states $\RMCommonState \in \RMStates$ of the RM represent different stages of the team's task. Transitions between RM states are triggered by events from $\RMEvents$. For example, the buttons task starts in state $\RMInitialState$. From this state, no buttons have been pressed and so the colored regions cannot be entered. When \(\buttonsAgentOne\) presses the yellow button in the environment, the event \(\yellowButton\) will cause the RM to transition to state \(\RMCommonState_1\). From this state, only event \(\greenButton\) causes an outgoing transition; \(\buttonsAgentTwo\) may now proceed across the yellow region in order to press the green button. Because, for example, the red region still prevents \(\buttonsAgentOne\) from reaching the goal location, event \(\goal\) does not cause a transition from \(\RMCommonState_1\). In this way, transitions in the RM represent progress through the task. 

Output function $\RMOutput$ assigns reward values to these transitions. In this work, we restrict ourselves to \textit{task completion} RMs, similar to those studied in \cite{xu2019joint}; $\RMOutput$ should only reward transitions that result in the immediate completion of the task. To formalize this idea, we define a subset  $\RMFinalStates \subseteq \RMStates$ of reward states. If the RM is in a state belonging to $\RMFinalStates$, it means the task is complete. The output function is then defined such that $\RMOutput(\RMCommonState, \RMCommonState') = 1$ if $\RMCommonState \notin \RMFinalStates$ and $\RMCommonState' \in \RMFinalStates$, and is defined to output $0$ otherwise. Furthermore, there should be no outgoing transitions from states in $\RMFinalStates$. In Figure $\ref{fig:buttons_rm}$, $\RMFinalStates = \{\RMCommonState_7\}$. If the event $\goal \in \RMEvents$ occurs while the RM is in state $\RMCommonState_6$ the task has been successfully completed; the RM will transition and return reward 1. 

A run of RM $\RM$ on the sequence of events $\RMCommonEvent_0 \RMCommonEvent_1 ... \RMCommonEvent_k \in \RMEvents^*$ is a sequence $\RMCommonState_0\RMCommonEvent_0\RMCommonState_1\RMCommonEvent_1...\RMCommonState_k\RMCommonEvent_k\RMCommonState_{k+1}$, where $\RMCommonState_0 = \RMInitialState$ and $\RMCommonState_{t+1} = \RMTransition(\RMCommonState_t, \RMCommonEvent_t)$. If $u_{k+1} \in \RMFinalStates$, then $\RMOutput(\RMCommonState_k, \RMCommonState_{k+1}) = 1$. In this case we say that the event sequence $\RMCommonEvent_0...\RMCommonEvent_k$ completes the task described by $\RM$, and we denote this statement $\RM(\RMCommonEvent_0...\RMCommonEvent_k) = 1$. Otherwise, $\RM(\RMCommonEvent_0...\RMCommonEvent_k) = 0$. For example, $\RM(\yellowButton \greenButton \agentTwoRedButton \agentThreeRedButton \redButton \Goal) = 1$, but $\RM(\yellowButton \greenButton \agentTwoRedButton) = 0$.

For notational convenience, we define the transition from state $\RMCommonState$ under an event sequence $\RMCommonEvent_0 \RMCommonEvent_1 ... \RMCommonEvent_k \in \RMEvents^*$ using the recursive definitions $\RMTransition(\RMCommonState, \EmptyString) = \RMCommonState$ and $\RMTransition(\RMCommonState, \RMEventSequence \RMCommonEvent) = \RMTransition(\RMTransition(\RMCommonState, \RMEventSequence), \RMCommonEvent)$ for $\RMEventSequence \in \RMEvents^*$ and $\RMCommonEvent \in \RMEvents$. Here, $\RMEvents^*$ is the Kleene closure of $\RMEvents$ and $\EmptyString$ is the empty string of events. So, $\RM(\RMCommonEvent_0 \RMCommonEvent_1 ... \RMCommonEvent_k) = 1$ if and only if $\RMTransition(\RMInitialState, \RMCommonEvent_0 \RMCommonEvent_1 ... \RMCommonEvent_k) \in \RMFinalStates$.

\subsection{Labeling Functions and Q-Learning with Reward Machines}
\label{sec:labeling_functions}

RMs may be applied to RL problems by using them to replace the reward function in an MDP. However, RMs describe tasks in terms of abstract events. To allow an RM to interface with the underlying environment, we define a \textit{labeling function} $\RMLabelingFunction : \mdpStates \times \RMStates \rightarrow \LabelingFunctionOutput$, which abstracts the current environment state to sets of high-level events. Note that $\RMLabelingFunction$ takes the current RM state $\RMCommonState \in \RMStates$ as well as the environment state $\SGCommonState \in \SGStates$ as input, allowing the events output by $\RMLabelingFunction$ to depend not only on the environment, but also on the current progress through the task. This component of $\RMLabelingFunction$'s definition aides in specifying local labeling functions, discussed in \S \ref{sec:local_labeling_functions}. Also, $\RMLabelingFunction$ is defined to output collections of events, allowing it to capture scenarios in which multiple events occur concurrently. In such a scenario, the events are passed as a sequence to the RM in no particular order.

Q-learning with RMs (QRM) \cite{icarte2018using} is an algorithm that learns a collection of q-functions, one for each RM state $\RMCommonState \in \RMStates$, corresponding to the optimal policies for each stage of the task. Algorithm \ref{alg:QRM} details the method. At each time step, if the agent is in RM state $\RMCommonState_1$ and environment state $\SGCommonState_1$, it uses its estimate of $\qValue_{\RMCommonState_1}(\SGCommonState_1, \cdot)$ to select action $\SGCommonAction$. The environment accordingly progresses to state $\SGCommonState_{2}$. The events output by $\RMLabelingFunction(\SGCommonState_{2}, \RMCommonState_1)$ then cause the RM to transition to state $\RMCommonState_{2}$, and the corresponding reward output by $\RMOutput$ is used to update the estimate of  $\qValue_{\RMCommonState_1}(\SGCommonJointState_1, \SGCommonAction)$: the optimal q-function for RM state $\RMCommonState_1$. At this stage, the agent also queries the rewards and RM transitions that would have occurred had the RM been in any other state $\RMCommonState$. This counterfactual information is used to update the estimate of each q-function $q_{\RMCommonState}$. The tabular QRM algorithm is guaranteed to converge to an optimal policy \cite{icarte2018using}.

A naive approach to applying RMs in the MARL setting would be to treat the entire team as a single agent and to use QRM to learn a centralized policy. This approach quickly becomes intractable, however, due to the exponential scaling of the number of states and actions with the number of agents. Furthermore, it assumes agents communicate with a central controller at every time step, which may be undesirable from an implementation standpoint.

\begin{algorithm}
    \DontPrintSemicolon 
    \KwIn{$\RM = \langle \RMStates, \RMInitialState, \RMEvents, \RMTransition, \RMOutput, \RMFinalStates \rangle$, $\RMLabelingFunction$, $\gamma$, $\alpha$}
    \KwOut{$Q = \{\qValue_{\RMCommonState} : \mdpStates \times \mdpActions \to \mathbb{R} | \RMCommonState \in \RMStates\}$}
    $Q \gets InitializeQFunctions()$\;
    \For{$n = 1$ \textbf{to} $NumEpisodes$} {
        $\RMCommonState_1 \gets \RMInitialState$, $\mdpCommonState_1 \gets environmentInitialState()$\;
        \For{$t=0$ \textbf{to} $NumSteps$}{
        $\mdpCommonAction \gets getAction(\qValue_{\RMCommonState_1}, \mdpCommonState_1)$\;
        $\mdpCommonState_2 \gets executeAction(\mdpCommonState_1, \mdpCommonAction)$\;
        
        $r, \RMCommonState_2 \gets rewardMachineOutput(\RMCommonState_1, \RMLabelingFunction(\SGCommonState_2, \RMCommonState_1))$\;
        
        $\qValue_{\RMCommonState_1}(\mdpCommonState_1, \mdpCommonAction) \gets (1-\alpha)\qValue_{\RMCommonState_1}(\mdpCommonState_1, \mdpCommonAction) + \alpha(r + \gamma \max_{\mdpCommonAction' \in \mdpActions}\qValue_{\RMCommonState_2}(\mdpCommonState_2, \mdpCommonAction'))$\;

        \For{$\RMCommonState \in \RMStates$, $\RMCommonState \neq \RMCommonState_1$}{
            $r, \RMCommonState' \gets rewardMachineOutput(\RMCommonState, \RMLabelingFunction(\SGCommonState_2, \RMCommonState))$;
            
            $\qValue_{\RMCommonState}(\mdpCommonState_1, \mdpCommonAction) \gets (1-\alpha)\qValue_{\RMCommonState}(\mdpCommonState_1, \mdpCommonAction) + \alpha(r + \gamma \max_{\mdpCommonAction' \in \mdpActions}\qValue_{\RMCommonState'}(\mdpCommonState_2, \mdpCommonAction'))$\;
        }
        
        $\RMCommonState_1 \gets \RMCommonState_2$, $\mdpCommonState_1 \gets \mdpCommonState_2$\;
        
        \If{$\RMCommonState_1 \in \RMFinalStates$}{$break$\;}
        }
    }
    \Return{$Q$}\;
    \caption{Q-Learning with Reward Machines}
    \label{alg:QRM}
\end{algorithm}


\subsection{Team Task Decomposition}

\label{sec:team_task_decomp}

A decentralized approach to MARL treats the agents as individual decision-makers, and therefore requires further consideration of the information available to each agent. In this work, we assume the $i^{th}$ agent can observe its own local state $\SGCommonState_i \in \SGStates_i$, but not the local states of its teammates. Given an RM $\RM$ describing the team's task and the corresponding event set $\RMEvents$, we assign the $i^{th}$ agent a subset $\RMEvents_i \subseteq \RMEvents$ of events. These events represent the high-level information that is available to the agent. We call $\RMEvents_i$ the \textit{local event set} of agent $i$. We assume that all events represented in $\RMEvents$ belong to the local event set of at least one of the agents, thus $\bigcup_{i=1}^{\SGNumAgents}\RMEvents_i = \RMEvents$.

For example, in the three-agent buttons task, the local event set assigned to $\buttonsAgentOne$ is $\RMEvents_1 = \{\yellowButton, \redButton, \Goal\}$; $\buttonsAgentOne$ has access to the yellow button, must know when the red button has been pressed, and should eventually proceed to the goal location. Note, for example, that events $\agentTwoRedButton$ and $\agentTwoNotRedButton$ are not included in $\RMEvents_1$ because they are specific to agent \(\buttonsAgentTwo\); they are not necessary pieces of information for the completion of \(\buttonsAgentOne\)'s sub-task. Similarly, event \(\greenButton\) is not in \(\RMEvents_1\) because from the perspective of \(\buttonsAgentOne\)'s task, it is also only an intermediate step in the process of pressing the red button. The event sets of $\buttonsAgentTwo$ and $\buttonsAgentThree$ are $\RMEvents_2 = \{\yellowButton, \greenButton, \agentTwoRedButton, \agentTwoNotRedButton, \redButton\}$ and $\RMEvents_3 = \{\greenButton, \agentThreeRedButton, \agentThreeNotRedButton, \redButton\}$, respectively. 

Extending the definition of natural projections on automata \cite{karimadini2011guaranteed, wong1998complexity} to reward machines, for each agent $i$, we define a new RM, $\RM_i$ $=$ $\langle \RMStates_i, \RMInitialState^i, \RMEvents_i, \RMTransition_i, \RMOutput_i, \RMFinalStates_i \rangle$, called the projection of $\RM$ onto $\RMEvents_i$. We begin by defining a notion of state equivalence under event set $\RMEvents_i$ 

Given $\RM = (\RMStates, \RMInitialState, \RMEvents, \RMTransition, \RMOutput, \RMFinalStates)$ and $\RMEvents_i \subseteq \RMEvents$ we define the equivalence relation on states under event set $\RMEvents_i$, denoted $\Relation_{i} \subseteq \RMStates \times \RMStates$, as the smallest equivalence relation such that:
\begin{enumerate}
    \item For all $\RMCommonState_1, \RMCommonState_2 \in \RMStates$, and $\RMCommonEvent \in \RMEvents$,\\ if $\RMCommonState_2 = \RMTransition(\RMCommonState_1, \RMCommonEvent)$ and $\RMCommonEvent \notin \RMEvents_i$, then $(\RMCommonState_1, \RMCommonState_2) \in \Relation_{i}$.
    \item If $\RMCommonState_1, \RMCommonState_1' \in \RMStates$, $(\RMCommonState_1, \RMCommonState_1') \in \Relation_i$ and $\RMTransition(\RMCommonState_1, \RMCommonEvent) = \RMCommonState_2$, $\RMTransition(\RMCommonState_1', \RMCommonEvent) = \RMCommonState_2'$ are both defined for some $\RMCommonEvent\in \RMEvents_i$, then $(\RMCommonState_2, \RMCommonState_2') \in \Relation_i$.
\end{enumerate}

The equivalence class of any state $\RMCommonState \in \RMStates$ under equivalence relation $\Relation_{i}$ is $[\RMCommonState]_{i} = \{\RMCommonState'\in \RMStates | (\RMCommonState, \RMCommonState') \in \Relation_{i}\}$. The quotient set of $\RMStates$ by $\Relation_{i}$ is defined as the set of all equivalence classes $\RMStates/{\Relation_{i}} = \{[\RMCommonState]_{i} | \RMCommonState \in \RMStates\}$. Equivalence relation $\Relation_i$ may be computed by first finding the smallest equivalence relation satisfying condition $(1)$, and then applying at most $|\RMStates|$ state-merging steps in order to satisfy $(2)$. An algorithm to compute this equivalence relation with runtime \(\mathcal{O}(|\RMStates|^7|\RMEvents_i|^2 + |\RMStates|^5 |\RMEvents_i| + |\RMStates|^4)\) is provided in appendix C of \cite{wong1998complexity}.

In words, the first condition guarantees that two states $\RMCommonState_1, \RMCommonState_2 \in \RMStates$ of the RM $\RM$ are members of the same equivalence class if a transition exists between them that is triggered by an event outside of the local event set $\RMEvents_i$. The equivalence classes thus represent the collections of states of $\RM$ that are indistinguishable to an agent who may only observe events from $\RMEvents_i$. The second condition ensures that from any equivalence class, a particular event $\RMCommonEvent \in \RMEvents_i$ may only trigger transitions to a unique successor equivalence class. Using this equivalence relation, we define the projection of $\RM$ onto $\RMEvents_i$. 

\begin{definition}
\label{def:projectionRewardMachine}
(RM projection onto a local event set) Given a reward machine $\RM = (\RMStates, \RMInitialState, \RMEvents, \RMTransition, \RMOutput, \RMFinalStates)$ and a local event set $\RMEvents_i \subseteq \RMEvents$, we define the projection of $\RM$ onto $\RMEvents_i$ as
$\RM_i = (\RMStates_i, \RMInitialState^i, \RMEvents_i, \RMTransition_i, \RMOutput_i, \RMFinalStates_i)$. 

\begin{itemize}
    \item The set of projected states $\RMStates_i$ is given by $\RMStates/\Relation_{i}$; each state $\RMCommonState^i \in \RMStates_i$ is an equivalence class of states from $\RMCommonState \in \RMStates$.
    \item The initial state is $\RMInitialState^i = [\RMInitialState]_{i}$.
    \item Transition function $\RMTransition_i : \RMStates_i \times \RMEvents_i \to \RMStates_i$ is defined such that $\RMCommonState_2^i = \RMTransition_i(\RMCommonState_1^i, \RMCommonEvent)$ if and only if there exist $\RMCommonState_1, \RMCommonState_2\in \RMStates$ such that $\RMCommonState_1^i = [\RMCommonState_1]_{i}$, $\RMCommonState_2^i = [\RMCommonState_2]_{i}$, and $\RMCommonState_2 = \RMTransition(\RMCommonState_1, \RMCommonEvent)$.
    \item The projected set of final states is defined as $\RMFinalStates_i = \{\RMCommonState^i \in \RMStates_i | \exists \RMCommonState \in \RMFinalStates \textrm{ such that } \RMCommonState^i = [\RMCommonState]_{i}\}$.
    \item The output function $\RMOutput_i : \RMStates_i \times \RMStates_i \to \mathbb{R}$ is defined such that $\RMOutput_i(\RMCommonState_1^i, \RMCommonState_2^i) = 1$ if $\RMCommonState_1^i \notin \RMFinalStates_i, \RMCommonState_2^i \in \RMFinalStates_i$ and $\RMOutput(\RMCommonState_1^i, \RMCommonState_2^i) = 0$ otherwise.
\end{itemize}
\end{definition}

The intuition behind this definition is as follows. To define the new set of states, we remove all transitions triggered by events \textit{not} contained in $\RMEvents_i$ and merge the corresponding states. The remaining transitions are used to define the projected transition function $\RMTransition_i$. The reward states $\RMFinalStates_i$ are defined as the collection of merged states containing at least one reward state $\RMCommonState \in \RMFinalStates$ from the original RM, and output function $\RMOutput_i$ is defined accordingly. We note that $\RMTransition_i$ is a well-defined function as a result of condition (2) in the definition of the equivalence relation $\Relation_i$. 

Figures \ref{fig:buttons_local_rm_1}, \ref{fig:buttons_local_rm_2}, and \ref{fig:buttons_local_rm_3} show the results of projecting the task RM from Figure \ref{fig:buttons_rm} onto the local event sets of $\RMEvents_1$, $\RMEvents_2$, and $\RMEvents_3$ respectively. As an example, we specifically examine $\RM_1$  $=$ $\langle \RMStates_1$, $\RMInitialState^1$, $\RMEvents_1$, $\RMTransition_1$, $\RMOutput_1$, $\RMFinalStates_1 \rangle$, illustrated in in Figure \ref{fig:buttons_local_rm_1}. Because events $\greenButton$, $\agentTwoRedButton$, $\agentTwoNotRedButton$, $\agentThreeRedButton$, and $\agentThreeNotRedButton$ are not elements of $\RMEvents_1$, any states connected by these events are merged to form the projected states $\RMStates_1$. For example, states $\RMCommonState_1$, $\RMCommonState_2$, $\RMCommonState_3, \RMCommonState_4$, $\RMCommonState_5 \in \RMStates$ which comprise the diamond structure in Figure \ref{fig:buttons_rm}, are all merged into projected state $\RMCommonState_1^1 \in \RMStates_1$. Intuitively, this portion of the team's RM $\RM$ encodes the necessary coordination between $\buttonsAgentTwo$ and $\buttonsAgentThree$ to press the red button, which is irrelevant to $\buttonsAgentOne$'s portion of the task and is thus represented as a single state in $\RM_1$. Note that $\RM_1$ describes $\buttonsAgentOne$'s contribution to the team's task; press the yellow button then wait for the red button to be pressed, before proceeding to the goal location. Intuitively, this high-level behavior is correct with respect to the team task, regardless of the behavior $\buttonsAgentTwo$ or $\buttonsAgentThree$.

However, clearly not all RMs and local event sets lead to projections describing behavior compatible with the original task. For example, if $\yellowButton \notin \RMEvents_1$, then $\RM_1$ would instead describe a task in which agent \(\buttonsAgentOne\) is not required to press the yellow button. This conflicts with the original task, in which the yellow button must be pressed before agent \(\buttonsAgentTwo\) can proceed across the yellow region. An important notion to define then, is compatibility between the original task, and the task described by a collection of projected RMs.

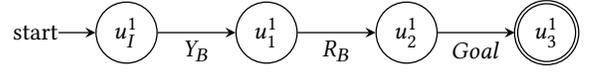
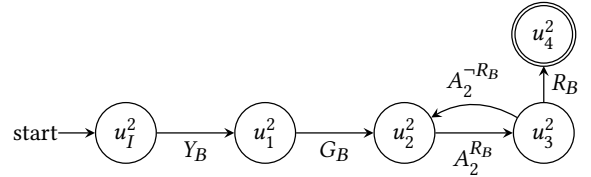
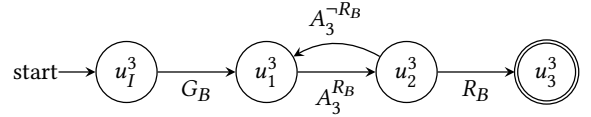
\begin{figure}[t!]
    \centering
    \begin{subfigure}[t]{\columnwidth}
        \centering
        \tikzset{auto,
        ->,
        >=stealth,
        node distance=2.2cm,
        every node/.style={scale=0.85, minimum size=0pt, inner sep=0pt}}

\newcommand{\blockOpacity}{15}
\newcommand{\minBlockHeight}{0.5cm}
\newcommand{\tileDim}{0.5cm}

\usetikzlibrary{shapes.geometric}

\newcommand{\hDist}{1.2cm}
\newcommand{\vDist}{0.75cm}

\resizebox{0.9\textwidth}{!}{

\begin{tikzpicture}

    \node[state, initial] (ui) {$\RMInitialState^1$};
    \path (ui.0)+(\hDist, 0.0cm) node (u1) [state] {$\RMCommonState_1^1$};
    \path (u1.0)+(\hDist, 0.0cm) node (u2) [state] {$\RMCommonState_2^1$};
    \path (u2.0)+(\hDist, 0.0cm) node (u3) [state, accepting] {$\RMCommonState_3^1$};
    
    \draw (ui) -> node[below, yshift=-0.1cm] {$\yellowButton$} (u1);
    \draw (u1) edge node[below, yshift=-0.1cm] {$\redButton$} (u2);
    \draw (u2) edge node[below, yshift=-0.1cm] {$\goal$} (u3);

\end{tikzpicture}}
        \caption{RM projection onto local event set $\RMEvents_1 = \{\yellowButton, \redButton, \goal\}$.}
        \label{fig:buttons_local_rm_1}
    \end{subfigure}%
    ~
    \\
    \vspace{0.3cm}
    \begin{subfigure}[t]{\columnwidth}
        \centering
        \tikzset{auto,
        ->,
        >=stealth,
        node distance=2.2cm,
        every node/.style={scale=0.85, minimum size=0pt, inner sep=0pt}}

\newcommand{\blockOpacity}{15}
\newcommand{\minBlockHeight}{0.5cm}
\newcommand{\tileDim}{0.5cm}

\usetikzlibrary{shapes.geometric}

\newcommand{\hDist}{1.2cm}
\newcommand{\vDist}{0.75cm}

\resizebox{0.9\textwidth}{!}{

\begin{tikzpicture}

    \node[state, initial] (ui) {$\RMInitialState^2$};
    \path (ui.0)+(\hDist, 0.0cm) node (u1) [state] {$\RMCommonState_1^2$};
    \path (u1.0)+(\hDist, 0.0cm) node (u2) [state] {$\RMCommonState_2^2$};
    \path (u2.0)+(\hDist, 0.0cm) node (u3) [state] {$\RMCommonState_3^2$};
    \path (u3.90)+(0.0cm, \vDist) node (u4) [state, accepting] {$\RMCommonState_4^2$};
    
    \draw (ui) -> node[below, yshift=-0.1cm] {$\yellowButton$} (u1);
    \draw (u1) edge node[below, yshift=-0.1cm] {$\greenButton$} (u2);
    \draw (u3) edge node[right, xshift=0.1cm] {$\redButton$} (u4);
    
    \draw (u2) edge node[below, yshift=-0.1cm] {$\agentTwoRedButton$} (u3);
    \draw (u3) edge[bend right] node[above, yshift=0.1cm] {$\agentTwoNotRedButton$} (u2);

\end{tikzpicture}}
        \caption{RM projection onto local event set $\RMEvents_2 = \{\yellowButton, \greenButton, \agentTwoRedButton, \agentTwoNotRedButton, \redButton\}$.}
        \label{fig:buttons_local_rm_2}
    \end{subfigure}
    ~
    \\
    \vspace{0.3cm}
    \begin{subfigure}[t]{\columnwidth}
        \centering
        \tikzset{auto,
        ->,
        >=stealth,
        node distance=2.2cm,
        every node/.style={scale=0.85, minimum size=0pt, inner sep=0pt}}

\newcommand{\blockOpacity}{15}
\newcommand{\minBlockHeight}{0.5cm}
\newcommand{\tileDim}{0.5cm}

\usetikzlibrary{shapes.geometric}

\newcommand{\hDist}{1.2cm}
\newcommand{\vDist}{0.75cm}

\resizebox{0.9\textwidth}{!}{

\begin{tikzpicture}

    \node[state, initial left] (ui) {$\RMInitialState^3$};
    \path (ui.0)+(\hDist, 0.0cm) node (u1) [state] {$\RMCommonState_1^3$};
    \path (u1.0)+(\hDist, 0.0cm) node (u2) [state] {$\RMCommonState_2^3$};
    \path (u2.0)+(\hDist, 0.0cm) node (u3) [state, accepting] {$\RMCommonState_3^3$};
    
    \draw (ui) -> node[below, yshift=-0.1cm] {$\greenButton$} (u1);
    \draw (u2) edge node[below, yshift=-0.1cm] {$\redButton$} (u3);
    
    \draw (u1) edge node[below, yshift=-0.1cm] {$\agentThreeRedButton$} (u2);
    
    \draw (u2) edge[bend right] node[above, yshift=0.1cm] {$\agentThreeNotRedButton$} (u1);

\end{tikzpicture}}
        \caption{RM projection onto local event set $\RMEvents_3 = \{\greenButton, \agentThreeRedButton, \agentThreeNotRedButton, \redButton\}$.}
        \label{fig:buttons_local_rm_3}
    \end{subfigure}
    \caption{Projections of the team RM illustrated in Figure \ref{fig:buttons_task_description}.}
    \label{fig:buttons_projected_rms}
\end{figure}

Consider some finite event sequence $\RMCommonEvent_0 ... \RMCommonEvent_k \in \RMEvents^*$. The natural projection \cite{lin1988observability} of the sequence onto onto $\RMEvents_i^*$, denoted $\Projection_i(\RMCommonEvent_0 ... \RMCommonEvent_k)\in \RMEvents_i^*$, is defined recursively by the relationships $\Projection_i(\EmptyString) = \EmptyString$, $\Projection_i(\RMEventSequence \RMCommonEvent) = \Projection_i(\RMEventSequence) \RMCommonEvent$ if $\RMCommonEvent \in \RMEvents_i$, and $\Projection_i(\RMEventSequence \RMCommonEvent) = \Projection_i(\RMEventSequence)$ if $\RMCommonEvent \notin \RMEvents_i$ for any $\RMEventSequence \in \RMEvents^*$. We remark that $\Projection_i(\RMEventSequence)$ may be thought of as the event sequence $\RMEventSequence$ from the point of view of the $i^{th}$ agent.

Theorem \ref{thm:rm_decomposability} defines a condition guaranteeing that the composition of the individual behaviors described by the projected RMs is equivalent to the behavior described by the original team RM. This condition uses the notions of parallel composition and bisimilarity.

Intuitively, the parallel composition of two or more RMs is a new RM describing all the possible interleavings of their events. If each RM encodes a sub-task, their parallel composition encodes all possible results of carrying out those sub-tasks concurrently. The bisimilarity of two RMs ensures that every sequence of transitions and rewards from one RM can be matched by the other, and vice versa; it provides a formal notion of equivalence between the tasks they encode. These are common concepts for finite transition systems \cite{cassandras2009introduction, baier2008principles, modelingAndControlofLogicalDES_kumar}, that are formally defined for RMs in the supplementary material.


\begin{theorem}
\label{thm:rm_decomposability}
Given RM $\RM$ and a collection of local event sets $\RMEvents_1$, $\RMEvents_2$, ..., $\RMEvents_N$ such that $\bigcup_{i=1}^N\RMEvents_i = \RMEvents$, let $\RM_1$, $\RM_2$, ..., $\RM_N$ be the corresponding collection of projected RMs. Suppose $\RM$ is bisimilar to the parallel composition of $\RM_1$, $\RM_2$, ..., $\RM_N$. Then given an event sequence $\RMEventSequence \in \RMEvents^*$, $\RM(\RMEventSequence) = 1$ if and only if $\RM_i(\Projection_i(\RMEventSequence)) = 1$ for all $i = 1,2, ... , N$. Otherwise, $\RM(\RMEventSequence) = 0$ and $\RM_i(\Projection_i(\RMEventSequence)) = 0$ for all $i=1,2,...,N$. 
\end{theorem}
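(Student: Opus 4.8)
The plan is to route the equivalence through the parallel composition, using bisimilarity only as the bridge between $\RM$ and that composition. Write $\RM_\parallel = \RM_1 \| \RM_2 \| \cdots \| \RM_N$ for the parallel composition: its states are the reachable tuples $(\RMCommonState^1, \ldots, \RMCommonState^N) \in \RMStates_1 \times \cdots \times \RMStates_N$, its initial state is $(\RMInitialState^1, \ldots, \RMInitialState^N)$, a transition on $\RMCommonEvent$ advances each component $i$ with $\RMCommonEvent \in \RMEvents_i$ according to $\RMTransition_i$ and freezes the others (and is defined only when every required component transition is), and its accepting tuples are those all of whose entries lie in the corresponding $\RMFinalStates_i$. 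The first step is to show, by induction on $|\RMEventSequence|$, that for every $\RMEventSequence \in \RMEvents^*$ the run of $\RM_\parallel$ on $\RMEventSequence$ is defined if and only if the run of $\RM_i$ on $\Projection_i(\RMEventSequence)$ is defined for every $i$, and that when these runs are defined $\RM_\parallel$ ends in the state $\big(\RMTransition_1(\RMInitialState^1, \Projection_1(\RMEventSequence)), \ldots, \RMTransition_N(\RMInitialState^N, \Projection_N(\RMEventSequence))\big)$. The inductive step appends a single event $\RMCommonEvent$ to a prefix $\RMEventSequence$ and splits, component by component, on whether $\RMCommonEvent \in \RMEvents_i$, invoking $\Projection_i(\RMEventSequence\RMCommonEvent) = \Projection_i(\RMEventSequence)\RMCommonEvent$ when $\RMCommonEvent \in \RMEvents_i$ and $\Projection_i(\RMEventSequence\RMCommonEvent) = \Projection_i(\RMEventSequence)$ otherwise, together with the recursive definition of $\RMTransition_i$ on event sequences. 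Since the accepting tuples of $\RM_\parallel$ are exactly the componentwise-accepting ones, this yields at once that $\RM_\parallel(\RMEventSequence) = 1$ if and only if $\RM_i(\Projection_i(\RMEventSequence)) = 1$ for all $i$.

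Next I would transport this across the bisimulation. Let $\mathcal{B}$ witness $\RM \Relation \RM_\parallel$; by the definition of bisimilarity for reward machines, $\mathcal{B}$ relates the two initial states, related states have matching enabled events leading to related successors, and related states agree on membership in their final-state sets (equivalently, emit matching rewards on matching transitions). Because both machines are deterministic, a second induction on $|\RMEventSequence|$ shows that the run of $\RM$ on $\RMEventSequence$ is defined iff the run of $\RM_\parallel$ on $\RMEventSequence$ is defined, and that when defined the two reached states are $\mathcal{B}$-related and hence agree on whether they are accepting. Thus $\RM(\RMEventSequence) = \RM_\parallel(\RMEventSequence)$ for every $\RMEventSequence \in \RMEvents^*$; chaining with the previous step gives the first assertion of the theorem: $\RM(\RMEventSequence) = 1$ if and only if $\RM_i(\Projection_i(\RMEventSequence)) = 1$ for all $i$.

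The remaining ``otherwise'' clause is where I expect the real difficulty. Since $\RM$ and each $\RM_i$ are task-completion machines their value on a sequence is $0/1$-valued by construction, so $\RM(\RMEventSequence) \neq 1$ forces $\RM(\RMEventSequence) = 0$, and negating the equivalence just proved yields $\RM_i(\Projection_i(\RMEventSequence)) \neq 1$ --- hence $=0$ --- for \emph{at least one} $i$. Strengthening this to ``$\RM_i(\Projection_i(\RMEventSequence)) = 0$ for \emph{every} $i$'' does not follow from bisimilarity alone: $\RM_\parallel$ can fail to accept $\RMEventSequence$ because of a synchronization failure in one component even though some other projected machine has already reached its final state on $\Projection_i(\RMEventSequence)$. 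To close this gap one must exploit the task-completion structure --- that no $\RM_i$ (and, via $\mathcal{B}$, no state of $\RM_\parallel$ or of $\RM$) has an outgoing transition from a final state, and that the projections inherit this property from $\RM$ --- and argue that under these constraints a ``lagging'' sub-task cannot complete while the team task is still incomplete. Pinning down exactly which hypotheses on $\RM$ and the local event sets $\RMEvents_i$ make that last implication valid is, in my view, the crux of the proof; the earlier steps are essentially careful bookkeeping over the recursive definitions of $\RMTransition_i$, $\Projection_i$, and parallel composition.
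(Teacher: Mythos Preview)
Your approach for the main equivalence is essentially the paper's: route through the parallel composition $\RM_p$, use bisimilarity (by induction on $|\RMEventSequence|$) to obtain $\RM(\RMEventSequence) = \RM_p(\RMEventSequence)$, and then use the product structure of $\RM_p$ (again by induction, splitting componentwise on whether $\RMCommonEvent \in \RMEvents_i$) to obtain $\RM_p(\RMEventSequence) = 1$ iff $\RM_i(\Projection_i(\RMEventSequence)) = 1$ for all $i$. The paper is terser---it cites standard references for the two inductions rather than spelling them out---but the skeleton is identical.

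Your concern about the ``otherwise'' clause is well placed, and the paper does not resolve it: its proof establishes only the if-and-only-if and stops there. In fact the clause, read literally as ``$\RM(\RMEventSequence)=0$ implies $\RM_i(\Projection_i(\RMEventSequence))=0$ for \emph{all} $i$,'' is too strong even in the paper's running example. Take $\RMEventSequence = \yellowButton\,\greenButton\,\agentThreeRedButton\,\agentTwoRedButton\,\redButton$: the team RM reaches $\RMCommonState_6 \notin \RMFinalStates$, so $\RM(\RMEventSequence)=0$, yet $\Projection_2(\RMEventSequence)=\yellowButton\,\greenButton\,\agentTwoRedButton\,\redButton$ drives $\RM_2$ to its accepting state $\RMCommonState_4^2$ and $\Projection_3(\RMEventSequence)=\greenButton\,\agentThreeRedButton\,\redButton$ drives $\RM_3$ to its accepting state $\RMCommonState_3^3$; only $\RM_1$ remains at $0$. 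What the equivalence actually yields under negation is ``$\RM(\RMEventSequence)=0$ iff $\RM_i(\Projection_i(\RMEventSequence))=0$ for \emph{some} $i$,'' and that weaker statement is all that is needed downstream---the Fr\'echet bound in Theorem~\ref{thm:frechet_bound} relies only on the set identity $\{\RM=1\}=\bigcap_i\{\RM_i=1\}$. So you should not look for a hidden extra argument here; the ``otherwise'' sentence ought to be read with an existential rather than universal quantifier over $i$, and your attempted strengthening via the no-outgoing-transitions-from-$\RMFinalStates$ property will not succeed in general.
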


\begin{proof}
Let $\RM_p = \langle \RMStates_p, \RMInitialState^p, \RMEvents, \RMTransition_p, \RMOutput_p, \RMFinalStates_p \rangle$ be the parallel composition of $\RM_1, ..., \RM_N$. Note that $\RM(\RMEventSequence) = 1$ if and only if $\RMTransition(\RMInitialState, \RMEventSequence) \in \RMFinalStates$. Similarly, $\RM_i(\Projection_i(\RMEventSequence)) = 1$ if and only if $\RMTransition_i(\RMInitialState^i, \Projection_i(\RMEventSequence)) \in \RMFinalStates_i$ for every $i=1,...,N$.  So, it is sufficient to show that $\RMTransition(\RMInitialState, \RMEventSequence) \in \RMFinalStates$ if and only if $\RMTransition_i(\RMInitialState^i, \Projection_i(\RMEventSequence)) \in \RMFinalStates_i$ for every $i=1,...,N$. Given the assumption that $\RM$ is bisimilar to $\RM_p$, we can show by induction that $\RMTransition(\RMInitialState, \RMEventSequence) \in \RMFinalStates$ if and only if $\RMTransition_p(\RMInitialState^p, \RMEventSequence) \in \RMFinalStates_p$ (see chapter 7 of \cite{baier2008principles}). Now, given the definition of parallel composition, it is readily seen that $\RMTransition_p(\RMInitialState^p, \RMEventSequence) \in \RMFinalStates_p$ if and only if $\RMTransition_i(\RMInitialState^i, \Projection_i(\RMEventSequence)) \in \RMFinalStates_i$ for every $i = 1,...,N$ \cite{modelingAndControlofLogicalDES_kumar}.
\end{proof}

We note that \cite{karimadini2011guaranteed, karimadini2016cooperative} present conditions, in terms of $\RM$ and $\RMEvents_1$, ..., $\RMEvents_N$, which may be applied to check whether $\RM$ is bisimilar to the parallel composition of its projections $\RM_1$, ..., $\RM_N$. Alternatively, one may computationally check whether this result holds by automatically constructing the parallel composition of the projected RMs and applying the \textit{Hopcroft-Karp} algorithm to check bisimilarity \cite{hopcroftKarp1971linear, bonchi2013checking}. The runtime of this algorithm is \(\mathcal{O}(|\RMEvents|(|\RMStates|+|\RMStates_p|))\), where \(\RMStates\) are the states of \(\RM\), and \(\RMStates_p\) are the states of the parallel composition of \(\RM_1, ..., \RM_N\) \cite{hopcroftKarp1971linear}. If the bisimilarity condition does not hold, the task designer might add to the local event sets, giving the agents access to more information, and re-check the condition.


\section{Decentralized Q-Learning with Projected Reward Machines (DQPRM)}

Inspired by Theorem \ref{thm:rm_decomposability}, we propose a distributed approach to learning a decentralized policy. Our idea is to use the projected RMs to define a collection of single-agent RL tasks, and to train each agent on their respective task using the QRM algorithm described in \S \ref{sec:labeling_functions}.

For clarity, we wish to train the agents using their projected RMs in an \textit{individual setting}: the agents take actions in the environment in the absence of their teammates. However, the policies they learn should result in a team policy that is successful in the \textit{team setting}, in which the agents interact simultaneously with the shared environment.

\subsection{Local Labeling Functions and Shared Event Synchronization}

\label{sec:local_labeling_functions}

Projected reward machine $\RM_i$ defines the task of the $i^{th}$ agent in terms of high-level events from $\RMEvents_i$. As discussed in \S \ref{sec:labeling_functions}, to connect $\RM_i$ with the underlying environment, a labeling function is required to define the environment states that cause the events in $\RMEvents_i$ to occur. In the team setting, we can intuitively define a labeling function $\RMLabelingFunction : \SGJointStates \times \RMStates \to 2^{\RMEvents}$ mapping team states $\SGCommonJointState \in \SGJointStates$ and RM states $\RMCommonState \in \RMStates$ to sets of events. For example, $\RMLabelingFunction(\SGCommonJointState, \RMCommonState) = \{\redButton\}$ if $\SGCommonJointState$ is such that $\buttonsAgentTwo$ and $\buttonsAgentThree$ are pressing the red button and $\RMCommonState = \RMCommonState_5$. 

To use $\RM_i$ in the individual setting however, we must first define a \textit{local labeling function} $\RMLabelingFunction_i : \SGStates_i \times \RMStates_i \to 2^{\RMEvents_i}$ mapping the local states $\SGCommonState_i \in \SGStates_i$ and projected RM states $\RMCommonState^i \in \RMStates_i$ to sets of events in $\RMEvents_i$. Operating under the assumption that only one event can occur per agent per time step, we require that, for any local state pair $(\SGCommonState_i, \RMCommonState^i)$, $\RMLabelingFunction_i(\SGCommonState_i, \RMCommonState^i)$ returns at most a single event from $\RMEvents_i$. Furthermore, the local labeling functions $\RMLabelingFunction_1$, $\RMLabelingFunction_2$,..., $\RMLabelingFunction_N$ should be defined such that they always collectively output the same set of events as $\RMLabelingFunction$, when being used in the team setting.

For a given event $\RMCommonEvent \in \RMEvents$, we define the set $I_{\RMCommonEvent} = \{i | \RMCommonEvent \in \RMEvents_i\}$ as the \textit{collaborating agents on $\RMCommonEvent$}. 

\begin{definition} (Decomposable labeling function)
\label{def:labeling_function_decomposability}
A labeling function $\RMLabelingFunction : \SGJointStates \times \RMStates \to 2^{\RMEvents}$ is considered decomposable with respect to local event sets $\RMEvents_1$, $\RMEvents_2$, ..., $\RMEvents_N$ if there exists a collection of local labeling functions $\RMLabelingFunction_1$, $\RMLabelingFunction_2$, ..., $\RMLabelingFunction_N$ with $\RMLabelingFunction_i : \SGStates_i \times \RMStates_i \to 2^{\RMEvents_i}$ such that:

\begin{enumerate}
    \item $|\RMLabelingFunction_i(\SGCommonState_i, \RMCommonState^i)| \leq 1$ for every $\SGCommonState_i \in \SGStates_i$ and every $\RMCommonState^i \in \RMStates_i$.
    \item $\RMLabelingFunction(\SGCommonState, \RMCommonState)$ outputs event $\RMCommonEvent$ if and only if $\RMLabelingFunction_i(\SGCommonState_i, \RMCommonState^i)$ outputs event $\RMCommonEvent$ for every $i$ in $I_\RMCommonEvent$. Here, $\SGCommonState_i$ is the $i^{th}$ component of the team's joint state $\SGCommonJointState$, and $\RMCommonState^i \in \RMStates_i$ is the state of RM $\RM_i$ containing state $\RMCommonState \in \RMStates$ from RM $\RM$ (recall that states in $\RMStates_i$ correspond to collections of states from $\RMStates$).
\end{enumerate}

\end{definition}

Note that $\RMLabelingFunction$ will be decomposable if we can define $\RMLabelingFunction_1$, ..., $\RMLabelingFunction_N$ to satisfy the conditions in Definition \ref{def:labeling_function_decomposability}. Following this idea, we conceptually construct $\RMLabelingFunction_i$ from $\RMLabelingFunction$ as follows:  $\RMLabelingFunction_i(\bar{\SGCommonState}_i, \RMCommonState^i)$ outputs event $\RMCommonEvent \in \RMEvents_i$ whenever there exists a possible configuration of agent $i$'s teammates $\SGCommonJointState = (\SGCommonState_1, ..., \bar{\SGCommonState}_i, ... , \SGCommonState_N)$ such that $\RMLabelingFunction(\SGCommonJointState, \RMCommonState)$ outputs $\RMCommonEvent$, where $\RMCommonState \in \RMStates$ is any state belonging to $\RMCommonState^i\in \RMStates_i$. Our interpretation of this definition of $\RMLabelingFunction_i$ is as follows. While $\RMLabelingFunction(\SGCommonJointState, \RMCommonState)$ outputs the events that occur when the team is in joint state $\SGCommonJointState$ and RM state $\RMCommonState$, the local labeling function $\RMLabelingFunction_i : \SGStates_i \times \SGActions_i \to 2^{\RMEvents_i}$ outputs the events in $\RMEvents_i$ that \textit{could} be occurring from the point of view of an agent who knows $\RMLabelingFunction$, but may only observe $\SGCommonState_i \in \SGStates_i$ and $\RMCommonState^i \in \RMStates_i$. 

Furthermore, to ensure $\RMLabelingFunction_1,...,\RMLabelingFunction_N$ output an event $\RMCommonEvent$ only if $\RMLabelingFunction$ does, we also require that each event $\RMCommonEvent \in \RMEvents$ be "under the control of at least one of the agents" in the following sense: if the agent is not in some particular subset of local states, the event will not be returned by $\RMLabelingFunction$, regardless of the states of the agent's teammates. A more formal definition of this construction of $\RMLabelingFunction_i$ as well as conditions on $\RMLabelingFunction$ that ensure $\RMLabelingFunction_i$ are well defined, are given in the supplementary material. 

We say event $\RMCommonEvent \in \RMEvents$ is a \textit{shared event} if it belongs to the local event sets of multiple agents, i.e., if $|I_{\RMCommonEvent}| > 1$. In the buttons task, $\yellowButton \in \RMEvents_1 \cap \RMEvents_2$ is an example of a shared event. Suppose $\buttonsAgentOne$ and $\buttonsAgentTwo$ use the events output by $\RMLabelingFunction_1$ and $\RMLabelingFunction_2$, respectively, to update $\RM_1$ and $\RM_2$, while interacting in the team setting. Because $\RMEvents_1$ and $\RMEvents_2$ both include the event $\yellowButton$, the agents must \textit{syncrhonize} on this event: $\yellowButton$ should simultaneously cause transitions in both projected RMs, or it should cause a transition in neither of them. In practice, synchronization on shared events is implemented as follows: If $\RMLabelingFunction_i$ returns a shared event $\RMCommonEvent$, the $i^{th}$ agent should check with all teammates in $I_{\RMCommonEvent}$ whether their local labeling functions also returned $\RMCommonEvent$, before using the event to update $\RM_i$. Event synchronization corresponds to the agents communicating and collectively acknowledging that the shared event occurred, before progressing through their repsective tasks. 

Given a sequence of joint states $\SGCommonJointState_0 \SGCommonJointState_1 ... \SGCommonJointState_k$ in the team setting, we may use $\RMLabelingFunction$ and $\RM$ to uniquely define a corresponding sequence $\RMLabelingFunction(\SGCommonJointState_o ... \SGCommonJointState_k) \in \RMEvents^*$ of events. Similarly, given the corresponding collection of sequences $\{s_0^i...s_k^i\}_{i=1}^N$ of local states, local labeling functions $\RMLabelingFunction_1$, ..., $\RMLabelingFunction_N$, and assuming that the agents synchronize on shared events, we may define the corresponding sequences $\RMLabelingFunction_i(\SGCommonState_0^i...\SGCommonState_k^i) \in \RMEvents_i^*$ of events for every $i = 1,2,...,N$. A step-by-step construction of sequences $\RMLabelingFunction(\SGCommonJointState_0...\SGCommonJointState_k)$ and $\RMLabelingFunction_i(\SGCommonState_0^i...\SGCommonState_k^i)$, as well as the details of the induction step of the proof of Theorem \ref{thm:reward_equivalence} are provided in the supplementary material.

\begin{theorem}
\label{thm:reward_equivalence}

Given $\RM$, $\RMLabelingFunction$, and $\RMEvents_1$, ..., $\RMEvents_N$, suppose the bisimilarity condition from Theorem \ref{thm:rm_decomposability} holds. Furthermore, assume $\RMLabelingFunction$ is decomposable with respect to $\RMEvents_1$, ..., $\RMEvents_N$ with the corresponding local labeling functions $\RMLabelingFunction_1$, ..., $\RMLabelingFunction_N$. Let $\SGCommonJointState_0...\SGCommonJointState_k$ be a sequence of joint environment states and $\{s_0^i...s_k^i\}_{i=1}^N$ be the corresponding sequences of local states. If the agents synchronize on shared events, then $\RM(\RMLabelingFunction(\SGCommonJointState_0...\SGCommonJointState_k)) = 1$ if and only if $\RM_i(\RMLabelingFunction_i(\SGCommonState_0^i...\SGCommonState_k^i)) = 1$ for all $i = 1,2,...,N$. Otherwise $\RM(\RMLabelingFunction(\SGCommonJointState_0...\SGCommonJointState_k)) = 0$ and $\RM_i(\RMLabelingFunction_i(\SGCommonState_0^i...\SGCommonState_k^i)) = 0$ for all $i = 1,2,...,N$.

\end{theorem}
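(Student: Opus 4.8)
The plan is to derive Theorem~\ref{thm:reward_equivalence} from Theorem~\ref{thm:rm_decomposability}. Since that theorem already equates $\RM(\RMEventSequence) = 1$ with $\RM_i(\Projection_i(\RMEventSequence)) = 1$ for every event sequence $\RMEventSequence \in \RMEvents^*$, it suffices to prove the single identity
\[
\RMLabelingFunction_i(\SGCommonState_0^i \ldots \SGCommonState_k^i) \;=\; \Projection_i\bigl(\RMLabelingFunction(\SGCommonJointState_0 \ldots \SGCommonJointState_k)\bigr), \qquad i = 1, \ldots, N,
\]
i.e.\ that the locally generated event sequence of agent $i$ is exactly the natural projection of the globally generated event sequence. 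Substituting $\RMEventSequence = \RMLabelingFunction(\SGCommonJointState_0 \ldots \SGCommonJointState_k)$ into Theorem~\ref{thm:rm_decomposability} then immediately yields both directions of the ``if and only if'' as well as the ``otherwise'' clause.

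First I would establish an auxiliary fact about the construction in Definition~\ref{def:projectionRewardMachine}: for every $\RMCommonState \in \RMStates$ and every $\word \in \RMEvents^*$ on which $\RMTransition(\RMCommonState, \cdot)$ is defined, $\RMTransition_i\bigl([\RMCommonState]_i, \Projection_i(\word)\bigr) = [\RMTransition(\RMCommonState, \word)]_i$. This goes by induction on $|\word|$: appending $\RMCommonEvent \notin \RMEvents_i$ leaves $[\RMTransition(\RMCommonState, \word)]_i$ unchanged by condition~(1) in the definition of $\Relation_i$ while $\Projection_i$ discards $\RMCommonEvent$, and appending $\RMCommonEvent \in \RMEvents_i$ advances both sides along the same (well-defined, by condition~(2) of $\Relation_i$) $\RMCommonEvent$-labeled transition. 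This fact is what lets me conclude, inside the main induction, that once the local event sequence of agent $i$ through time $t$ is known to equal $\Projection_i$ of the global event sequence through time $t$, the state of $\RM_i$ at time $t$ is $[\RMCommonState_t]_i$, where $\RMCommonState_t$ is the state of $\RM$ at time $t$ --- exactly the information needed to evaluate $\RMLabelingFunction_i(\SGCommonState_{t+1}^i, \RMCommonState_t^i)$ at the next step, since both labeling functions are state-dependent.

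With this in hand I would prove the event-sequence identity by induction on $k$, carrying the state correspondence above as an invariant. The base case is trivial. For the step, compare the set of events $\RMLabelingFunction(\SGCommonJointState_{t+1}, \RMCommonState_t)$ appended to the global sequence (serialized in a fixed order when it has several elements) with the at-most-one event $\RMLabelingFunction_i(\SGCommonState_{t+1}^i, \RMCommonState_t^i)$ that agent $i$ retains after synchronization. For $\RMCommonEvent \in \RMEvents_i$, condition~(2) of Definition~\ref{def:labeling_function_decomposability} states that $\RMCommonEvent \in \RMLabelingFunction(\SGCommonJointState_{t+1}, \RMCommonState_t)$ exactly when $\RMLabelingFunction_j(\SGCommonState_{t+1}^j, \RMCommonState_t^j)$ outputs $\RMCommonEvent$ for every $j \in \eventAgentIndex$ --- which is precisely the check performed by the synchronization protocol before agent $i$ commits to $\RMCommonEvent$. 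Hence agent $i$ retains $\RMCommonEvent$ iff $\RMCommonEvent \in \RMEvents_i$ and $\RMCommonEvent$ lies in the global output, so the symbol agent $i$ appends at step $t$ is exactly $\Projection_i$ applied to the symbols appended to the global sequence. Together with the inductive hypothesis and the state invariant (which guarantees the RM states being conditioned on are consistent), this carries the identity from $t$ to $t+1$.

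The step I expect to be the main obstacle is the bookkeeping in this inductive step: making precise how several events in $\RMLabelingFunction(\SGCommonJointState_{t+1}, \RMCommonState_t)$ are serialized and distributed among the agents consistently, and verifying that synchronization recovers the global output \emph{exactly} rather than over-approximating it --- $\RMLabelingFunction_i$ is, after all, built by existentially quantifying over teammates' local states, so on its own it can return spurious events. This is exactly what condition~(2) of Definition~\ref{def:labeling_function_decomposability}, together with the requirement that every event be under the control of at least one agent, is there to rule out, and it is where the step-by-step construction of $\RMLabelingFunction(\SGCommonJointState_0 \ldots \SGCommonJointState_k)$ and $\RMLabelingFunction_i(\SGCommonState_0^i \ldots \SGCommonState_k^i)$ deferred to the supplementary material is needed; once that is pinned down, the statement reduces cleanly to Theorem~\ref{thm:rm_decomposability}.
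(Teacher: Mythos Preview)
Your proposal is correct and follows essentially the same route as the paper: both reduce to Theorem~\ref{thm:rm_decomposability} via the identity $\RMLabelingFunction_i(\SGCommonState_0^i\ldots\SGCommonState_k^i)=\Projection_i(\RMLabelingFunction(\SGCommonJointState_0\ldots\SGCommonJointState_k))$, prove it by induction on $t$ while carrying the invariant $\RMCommonState_t\in\RMCommonState_t^i$ (equivalently $\RMCommonState_t^i=[\RMCommonState_t]_i$), and use Definition~\ref{def:labeling_function_decomposability} plus synchronization to match the per-step outputs. The only cosmetic difference is that you factor the state-tracking step out as a preliminary lemma $\RMTransition_i([\RMCommonState]_i,\Projection_i(\word))=[\RMTransition(\RMCommonState,\word)]_i$, whereas the paper establishes $\RMCommonState_{t+1}\in\RMCommonState_{t+1}^i$ directly inside the induction; the content is identical.
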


\begin{proof}
Note that given the result of Theorem \ref{thm:rm_decomposability}, it is sufficient to show that for every $i=1,2,...,N$, the relationship  $P_{i}(\RMLabelingFunction(\SGCommonJointState_0 \SGCommonJointState_1 ... \SGCommonJointState_k)) = \RMLabelingFunction_i(\SGCommonState_0^i \SGCommonState_1^i ... \SGCommonState_k^i)$ holds. In words, we wish to show that for every $i$, the projection of the sequence output by labeling function $\RMLabelingFunction$ is equivalent to the sequence of synchronized outputs of local labeling function $\RMLabelingFunction_i$. To do this, we show that $l_t \cap \RMEvents_i = \Tilde{l}_t^i$ for every $i = 1,...,N$ and every $t = 1, ... ,k$. Here, $l_t$ denotes the output of labeling function $\RMLabelingFunction$ at time $t$, and $\Tilde{l}_t^i$ denotes the synchronized output of local labeling function $\RMLabelingFunction_i$ at time $t$. 

At time $t=0$, $l_0, \Tilde{l}_0^i$ are defined to be the empty set and the above condition holds trivially. We also have that at time $t = 0$, $\RMInitialState^i \in \RMInitialState$ by definition of initial state $\RMInitialState^i$. If, for any time $t$, $\RMCommonState_t \in \RMCommonState_t^i$ for each $i = 1,2,...,N$, we can use the definition of the decomposability of labeling function $\RMLabelingFunction$ to show that $l_{t+1}\cap \RMEvents_i = \Tilde{l}_{t+1}^i$ for each $i$. Using the definition of the projected RMs, we may show that $l_{t+1}\cap \RMEvents_i = \Tilde{l}_{t+1}^i$ implies $\RMCommonState_{t+1} \in \RMCommonState_{t+1}^i$. Thus by induction, we conclude the proof.

\end{proof}

Let $V^{\JointPolicy}(\SGCommonJointState)$ denote the expected sum of future undiscounted rewards returned by $\RM$, given the team follows joint policy $\JointPolicy$ from joint environment state $\SGCommonJointState \in \SGJointStates$ and initial RM state $\RMInitialState$. Similarly, let $V_i^{\JointPolicy}(\SGCommonJointState)$ denote the expected future reward returned by $\RM_i$, given the team follows the same policy from state $\SGCommonJointState$ and projected initial RM state $\RMInitialState^i$. Using the result of Theorem \ref{thm:reward_equivalence} and the Fr\'echet conjunction inequality, we provide the following upper and lower bounds on the value function corresponding to $\RM$, in terms of the value functions corresponding to projected RMs $\RM_i$.

\begin{theorem}
\label{thm:frechet_bound}
If the conditions in Theorem \ref{thm:reward_equivalence} are satisfied, then
\begin{multline*}
\max\{0, V_1^{\JointPolicy}(\SGCommonJointState) + V_2^{\JointPolicy}(\SGCommonJointState) + ... + V_N^{\JointPolicy}(\SGCommonJointState)-(N-1)\} \leq V^{\JointPolicy}(\SGCommonJointState) \\ \leq \min\{V_1^{\JointPolicy}(\SGCommonJointState), V_2^{\JointPolicy}(\SGCommonJointState), ..., V_N^{\JointPolicy}(\SGCommonJointState)\}.
\end{multline*}
\end{theorem}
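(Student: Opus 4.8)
The plan is to interpret every value function appearing in the statement as the probability of a task-completion event on a single, common probability space, and then to read off the two bounds from monotonicity of probability and the Fr\'echet conjunction inequality. Concretely, fix the joint policy $\JointPolicy$ and the joint initial state $\SGCommonJointState$, and consider the stochastic process that generates the joint trajectory $\SGCommonJointState_0 \SGCommonJointState_1 \SGCommonJointState_2 \ldots$ with $\SGCommonJointState_0 = \SGCommonJointState$, together with the induced local state sequences $\SGCommonState_0^i \SGCommonState_1^i \ldots$ for each agent $i$. Because $\RM$ is a task-completion RM --- its output is $1$ exactly on the transition into $\RMFinalStates$, there are no outgoing transitions from $\RMFinalStates$, and the rewards are undiscounted --- the total reward accumulated along a trajectory is the $\{0,1\}$-valued indicator of the event $A := \{\,\RM(\RMLabelingFunction(\SGCommonJointState_0 \ldots \SGCommonJointState_k)) = 1 \text{ for some } k\,\}$, so that $V^{\JointPolicy}(\SGCommonJointState) = \Pr[A]$. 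The same reasoning, applied to each projected RM $\RM_i$ (whose output $\RMOutput_i$ is $1$ exactly on entering $\RMFinalStates_i$, by Definition~\ref{def:projectionRewardMachine}) driven by $\SGCommonState_0^i \SGCommonState_1^i \ldots$ with shared-event synchronization, gives $V_i^{\JointPolicy}(\SGCommonJointState) = \Pr[A_i]$, where $A_i := \{\,\RM_i(\RMLabelingFunction_i(\SGCommonState_0^i \ldots \SGCommonState_k^i)) = 1 \text{ for some } k\,\}$.

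Next I would invoke Theorem~\ref{thm:reward_equivalence} to identify these events. For every fixed horizon $k$, that theorem states $\RM(\RMLabelingFunction(\SGCommonJointState_0 \ldots \SGCommonJointState_k)) = 1$ if and only if $\RM_i(\RMLabelingFunction_i(\SGCommonState_0^i \ldots \SGCommonState_k^i)) = 1$ for all $i = 1, \ldots, N$. Since a completed task stays completed as the horizon grows --- no transitions leave $\RMFinalStates$ or any $\RMFinalStates_i$ --- the per-horizon equivalences are monotone in $k$, so taking the union over $k$ preserves them and yields the identity of events $A = \bigcap_{i=1}^N A_i$ in the common probability space. Hence $V^{\JointPolicy}(\SGCommonJointState) = \Pr\!\big[\bigcap_{i=1}^N A_i\big]$.

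The two bounds are now immediate. For the upper bound, monotonicity of probability gives $\Pr[\bigcap_{i} A_i] \leq \Pr[A_j]$ for each $j$, hence $V^{\JointPolicy}(\SGCommonJointState) \leq \min_j V_j^{\JointPolicy}(\SGCommonJointState)$. For the lower bound, the Fr\'echet conjunction inequality --- equivalently, the union bound applied to the complements, $\Pr[\bigcap_i A_i] = 1 - \Pr[\bigcup_i A_i^c] \geq 1 - \sum_i(1 - \Pr[A_i])$ --- gives $V^{\JointPolicy}(\SGCommonJointState) \geq \sum_{i=1}^N V_i^{\JointPolicy}(\SGCommonJointState) - (N-1)$; combining this with the trivial bound $V^{\JointPolicy}(\SGCommonJointState) \geq 0$ produces $V^{\JointPolicy}(\SGCommonJointState) \geq \max\{0, \sum_{i=1}^N V_i^{\JointPolicy}(\SGCommonJointState) - (N-1)\}$, as claimed.

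I expect the only delicate point to be the first two paragraphs: making precise that the undiscounted return of a task-completion RM is literally the indicator of eventual completion (so the value function is a genuine probability), and that the runs of $\RM$, of the $\RM_i$, and of the underlying Markov game can all be placed on one probability space so that Theorem~\ref{thm:reward_equivalence} --- which is stated pathwise for finite trajectories --- transfers to an identity of events via a monotone limit in the horizon. Care is also needed if episodes have a fixed finite length $\maxLengthEpisode$ rather than being unbounded, but the argument goes through verbatim provided $V^{\JointPolicy}$ and all $V_i^{\JointPolicy}$ use the same horizon. Once this coupling is established, the remainder is a one-line application of monotonicity of probability and the Fr\'echet inequality.
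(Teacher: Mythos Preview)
Your proposal is correct and follows essentially the same route as the paper: interpret $V^{\JointPolicy}(\SGCommonJointState)$ and each $V_i^{\JointPolicy}(\SGCommonJointState)$ as probabilities of task completion, use Theorem~\ref{thm:reward_equivalence} to identify the team-completion event with the intersection of the sub-task completion events, and apply the Fr\'echet conjunction inequality. Your treatment is in fact more careful than the paper's --- you make explicit the common probability space, the role of the task-completion structure in rendering the return an indicator, and the monotone limit in the horizon --- but the underlying argument is the same.
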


\begin{proof}
Note that because $\RM$ returns 1 if the task it encodes is completed and 0 otherwise, $V^{\JointPolicy}(\SGCommonJointState_0)$ is equivalent to the probability of the task being completed, given the team follows policy $\JointPolicy$ from the initial state $\SGCommonJointState_0$ and initial reward machine state $\RMInitialState$. Similarly, $V_i^{\JointPolicy}(\SGCommonJointState_0)$ is the probability of satisfying the task encoded by $\RM_i$.

By the result of Theorem 2, for any sequence of team states $\SGCommonJointState_0...\SGCommonJointState_k$, $\RM(\RMLabelingFunction(\SGCommonJointState_0...\SGCommonJointState_k)) = 1$ if and only if $\RM_i(\RMLabelingFunction_i(\SGCommonState_1^i...\SGCommonState_k^i)) = 1$ for all $i = 1,...,N$. So, the likelihood of completing the task encoded by $\RM$ under policy $\JointPolicy$ is equivalent to the likelihood of simultaneously completing all the tasks encoded by the collection $\{\RM_i\}_{i=1}^N$ under the same policy. Using our interpretation of $V^{\JointPolicy}(\SGCommonJointState_0)$ and $V^{\JointPolicy}_i(\SGCommonJointState_0)$ as these probabilities, we apply the Fr\'echet conjunction inequality to arrive at the final result.
\end{proof}

\subsection{Training and Evaluating}

\begin{figure}
    \centering
    \begin{subfigure}[t]{\columnwidth}
        \begin{center}
        \includegraphics[width=0.92\textwidth]{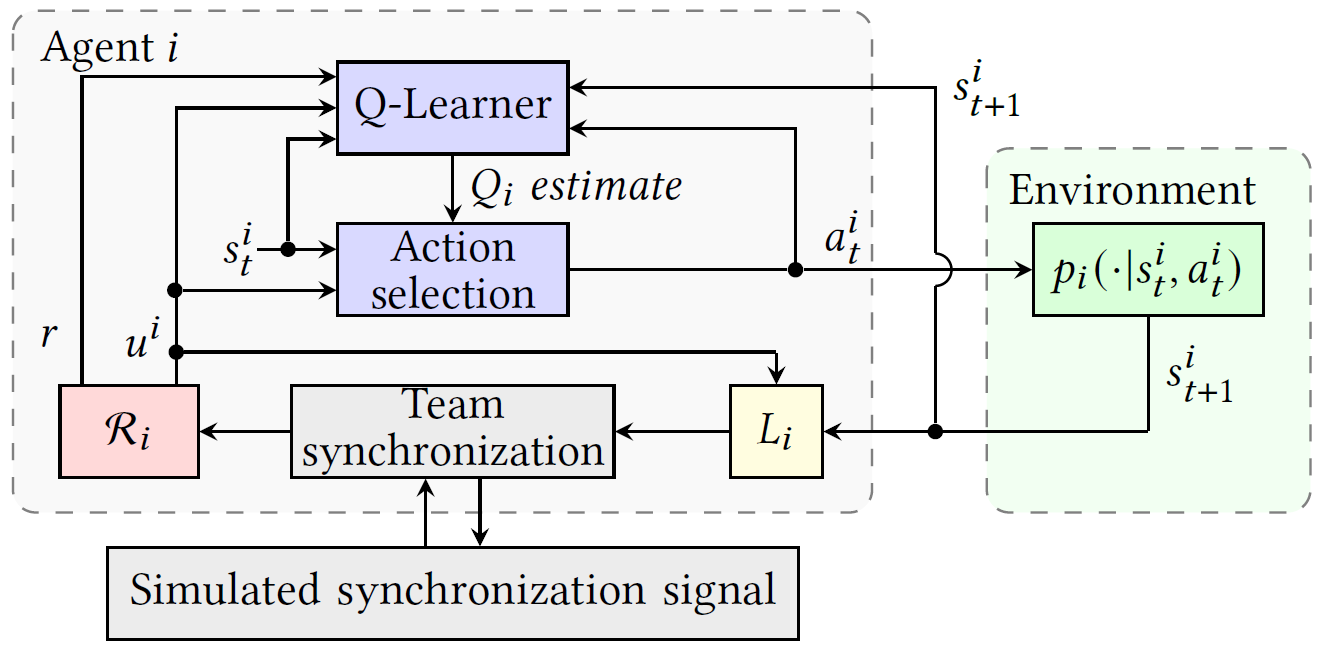}
        \end{center}
        \caption{DQPRM policy training in individual setting.}
        \label{fig:dqprm_training}
    \end{subfigure}
    ~
    \\
    \vspace{0.2cm}
    \begin{subfigure}[t]{\columnwidth}
        \begin{center}
        \includegraphics[width=0.92\textwidth]{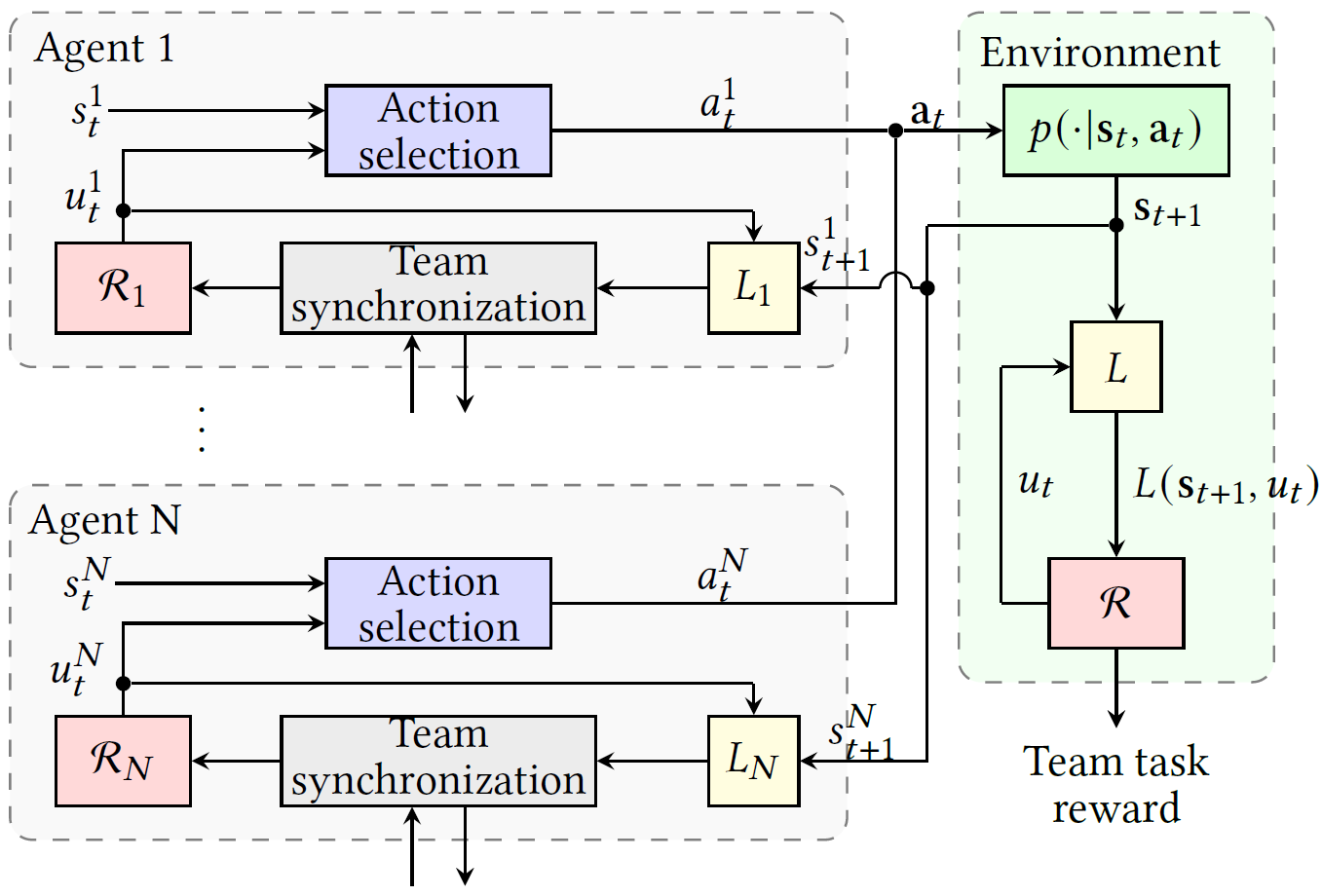}
        \end{center}
        \caption{DQPRM policy execution in team setting.}
        \label{fig:dqprm_execution}
    \end{subfigure}
    \caption{}
    \label{fig:decentralized_testing_fig}
\end{figure}

Theorem \ref{thm:reward_equivalence} tells us that it makes no difference whether we use RM $\RM$ and team labeling function $\RMLabelingFunction$, or projected RMs $\RM_1$,...,$\RM_N$ and local labeling functions $\RMLabelingFunction_1$, ..., $\RMLabelingFunction_N$ to describe the team task. By replacing $\RM$ and $\RMLabelingFunction$ with $\RM_1$, ..., $\RM_N$ and $\RMLabelingFunction_1$, ..., $\RMLabelingFunction_N$ however, we note that the only interactions each agent has with its teammates are synchronizations on shared events. 

This key insight provides a method to train the agents separately from their teammates. We train each agent in an individual setting, isolated from its teammates, using rewards returned from $\RM_i$ and events returned from $\RMLabelingFunction_i$. Whenever $\RMLabelingFunction_i$ outputs what would be a shared event in the team setting, we randomly provide a simulated synchronization signal with a fixed probability of occurrence. Figure \ref{fig:dqprm_training} illustrates this approach. 

By simulating the synchronization on shared events, we take an optimistic approach to decentralized learning. Each agent learns to interact with idealized teammates in the sense that during training, any shared events necessary for task progression will always occur, albeit after some random amount of time.

During training, each agent individually performs q-learning to find an optimal policy for the sub-task described by its projected RM, similarly to as described in \S \ref{sec:labeling_functions}. The $i^{th}$ agent learns a collection of q-functions $Q_i = \{\qValue_{\RMCommonState^i} | \RMCommonState^i \in \RMStates_i\}$ such that each q-function $\qValue_{\RMCommonState^i} : \SGStates_i \times \SGActions_i \rightarrow \mathbb{R}$ corresponds to the agent's optimal policy while it is in projected RM state $\RMCommonState^i$.

To evaluate the learned policies, we test the team by allowing the agents to interact in the team environment and evaluate the team's performance using team task RM $\RM$. Each agent tracks its own task progress using its projected RM $\RM_i$ and follows the policy it learned during training, as shown in Figure \ref{fig:dqprm_execution}.
\section{Experimental Results}
\label{sec:experimental_results}

In this section, we provide empirical evaluations of DQPRM in three task domains.\footnote{Project code is publicly available at: \href{https://github.com/cyrusneary/rm-cooperative-marl}{github.com/cyrusneary/rm-cooperative-marl}.} The buttons task is as described in \S \ref{sec:reward_machines_for_MARL}. We additionally consider two-agent and ten-agent rendezvous tasks in which each agent must simultaneously occupy a specific rendezvous location before individually navigating to separate goal locations. 

We compare DQPRM's performance against three baseline algorithms: the naive centralized QRM (CQRM) algorithm described in \S \ref{sec:labeling_functions}, independent q-learners (IQL) \cite{MARLIndependentvsCooperativeAgentsTan}, and hierarchical independent learners (h-IL) \cite{hierarchicalDeepMARL}. Because of the non-Markovian nature of the tasks, we provide both the IQL and h-IL agents with additional memory states. In the buttons task, the memory states encode which buttons have already been pressed. In the rendezvous task, the memory state encodes whether or not the team has completed the rendezvous, and whether each agent has reached its goal.

Each IQL agent learns a q-function mapping augmented state-action pairs to values. That is, the $i^{th}$ agent learns a q-function $\qValue_i : \SGStates_i \times \SGStates_{M_i} \times \SGActions_i \to \mathbb{R}$, where $\SGStates_i$, $\SGActions_i$ are the local states and actions of the agent and $\SGStates_{M_i}$ is the finite set of its memory states. 

Our implementation of h-IL is inspired by the learning structure outlined in \cite{hierarchicalDeepMARL}. Each agent uses tabular q-learning to learn a meta-policy --- which uses the current memory state to select a high-level option --- as well as a collection of low-level policies --- which implement those options in the environment.  The available options correspond to the high-level tasks available to each agent. For example, $\buttonsAgentOne$ in the buttons task is provided with the following three options: remain in a non-colored region, navigate to the yellow button, and navigate to the goal location. Furthermore, we prune the available options when necessary. For example, before the red button has been pressed, \(\buttonsAgentOne\) cannot cross the red region to reach the goal, so, it doesn't have access to the corresponding option.

In all algorithms, we use a discount factor $\SGDiscount = 0.9$ and a learning rate $\learningRate = 0.8$. For action selection, we use softmax exploration with a constant temperature parameter $\tau = 0.02$ \cite{tijsma2016comparing}. In the training of the DQPRM agents, if an agent observes a shared event, then with probability $0.3$, it is provided with a signal simulating successful synchronization with all collaborators. 

All tasks are implemented in a 10x10 gridworld and all agents have the following 5 available actions: move right, move left, move up, move down, or don't move. If an agent moves in any direction, then with a $2\%$ chance the agent will instead slip to an adjacent state. Each episode lasts 1,000 time steps. We perform periodic testing episodes in which the agents exploit the policies they have learned and team's performance is recorded. 

\begin{figure}
    \centering
    \begin{subfigure}[t]{\columnwidth}
        \centering
        \includegraphics[width=0.9\textwidth]{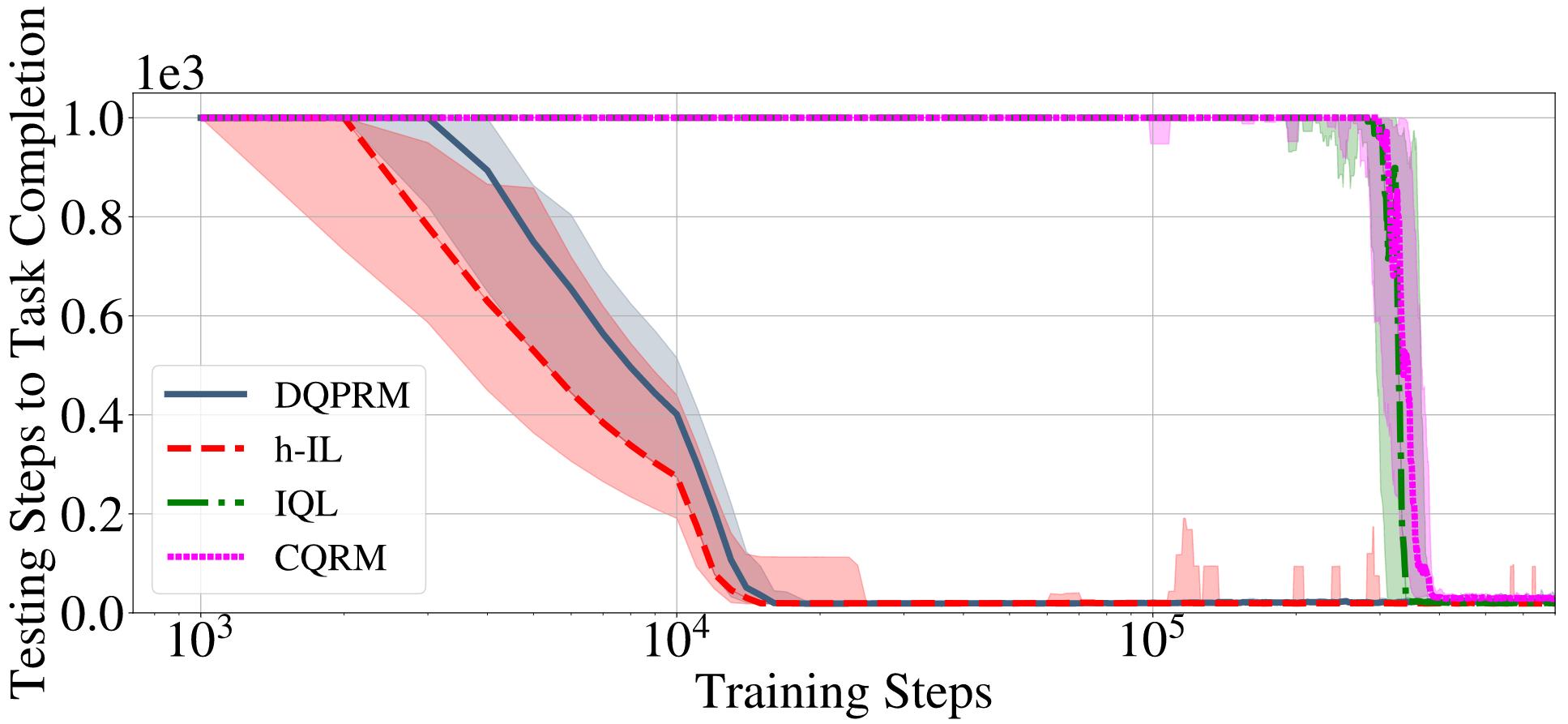}
        \caption{Two-agent rendezvous task.}
        \label{fig:two_agent_rendezvous_results}
    \end{subfigure}
    ~
    \\
    \begin{subfigure}[t]{\columnwidth}
        \centering
        \includegraphics[width=0.9\textwidth]{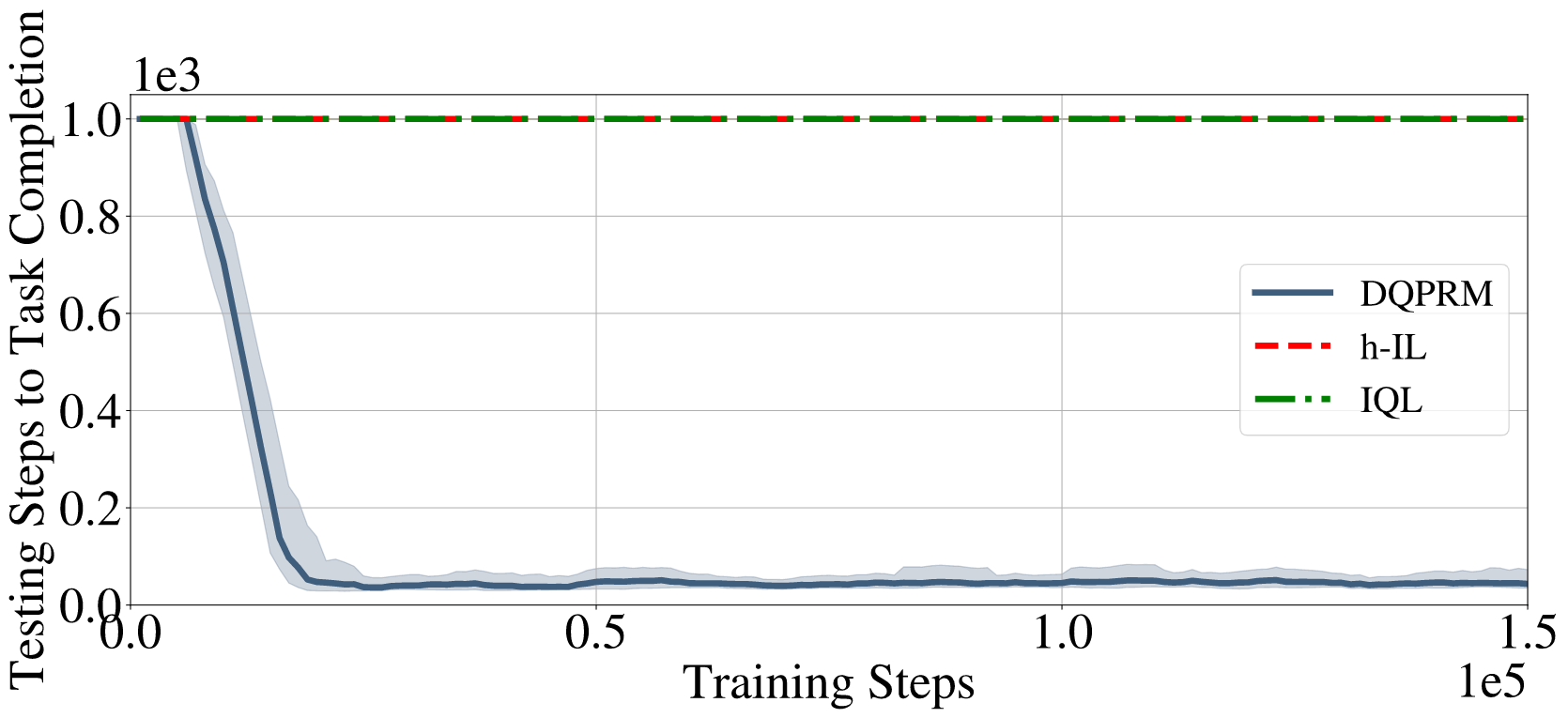}
        \caption{Ten-agent rendezvous task.}
        \label{fig:ten_agent_rendezvous_results}
    \end{subfigure}
    ~
    \\
    \begin{subfigure}[t]{\columnwidth}
        \centering
        \includegraphics[width=0.9\textwidth]{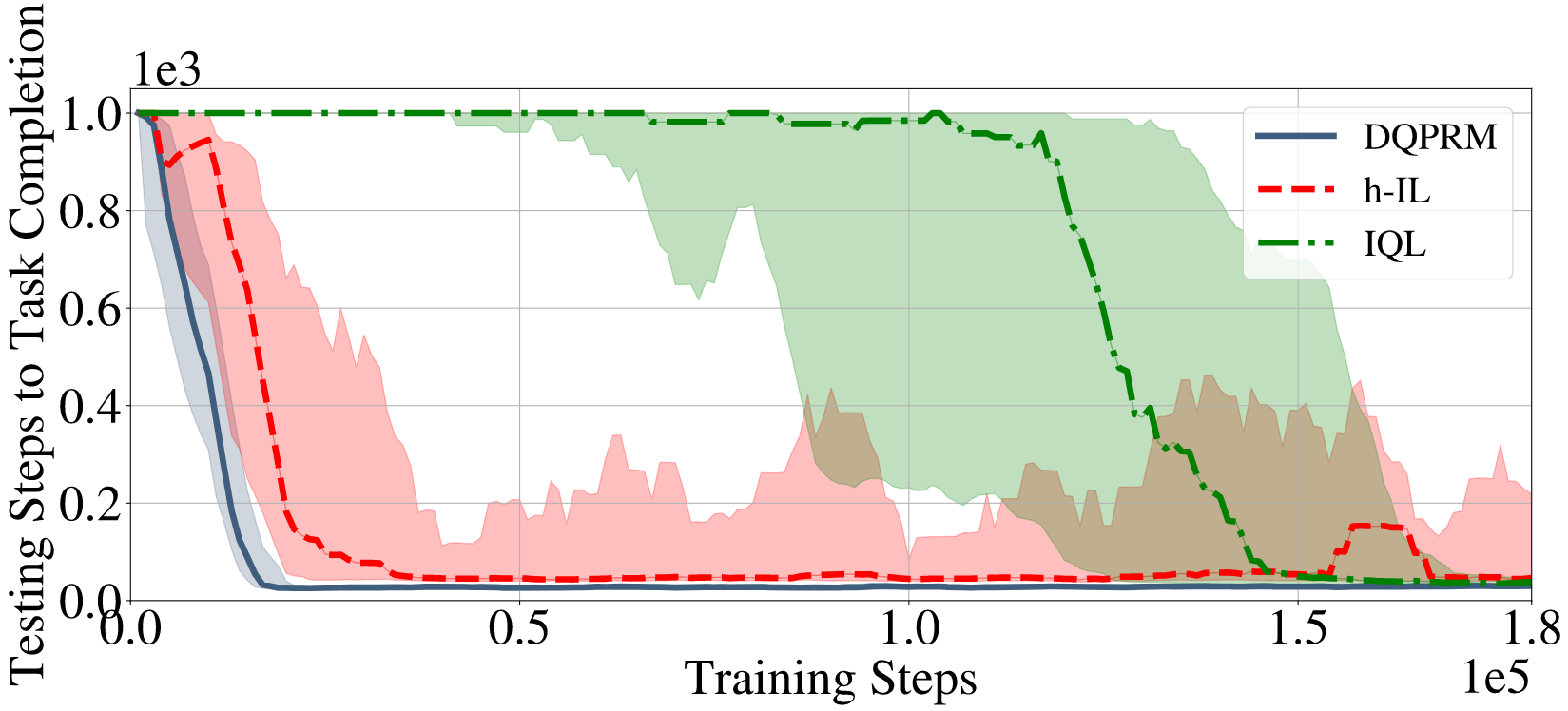}
        \caption{Three-agent buttons task.}
        \label{fig:buttons_results}
    \end{subfigure}
    \caption{Algorithm performance on various tasks. Lower is better. The y-axis show the number of steps required for the learned policies to complete the task. The x-axis shows the number of elapsed training steps.}
    \label{fig:results}
\end{figure}

Figure \ref{fig:results} shows the experimental results for each algorithm over 10 separate runs per domain. The figures plot the median number of testing steps required to complete the team task against the number of elapsed training steps. Because DQPRM trains each agent individually, one training step on the plot refers to one training step taken by each individual agent. The shaded regions enclose the $25^{th}$ and $75^{th}$ percentiles. We note that the CQRM baseline is only tested in the two-agent scenario because the centralized approach requires excessive amounts of memory for more agents; storing a centralized policy for three agents requires approximately two billion separate values.

While the h-IL baseline marginally outperforms the proposed DQPRM algorithm in the two-agent rendezvous task, in the more complex tasks involving more agents, DQPRM outperforms all baseline methods. In the ten-agent rendezvous task, the baseline methods fail to learn a successful team policy within the allowed number of training steps. This demonstrates the ability of the proposed DQPRM algorithm to scale well with the number of agents. In the buttons task, DQPRM quickly learns a policy that completes the task 20 steps faster than that of the h-IL baseline. This difference in performance likely arises because the hierarchical approach has pruned optimal policies, whereas QRM guarantees that each agent converges to the optimal policy for their sub-task \cite{icarte2018using}.

The key advantage of the DQPRM algorithm, is that it uses the information provided in the RM to train the agents entirely separately. This removes the problem of non-stationarity and it allows each agent to more frequently receive reward during training. This is especially beneficial to the types of tasks we study, which have sparse and delayed feedback. We further discuss the differences between DQPRM and hierarchical approaches to MARL in \S \ref{sec:related_work}.
\section{Related Work}
\label{sec:related_work}

Task decomposition in multi-agent systems has been studied from a planning and cooperative control perspective \cite{karimadini2011guaranteed, karimadini2016cooperative, decentralized_uav_control,dai2014automatic}. These works examine the conditions in which group tasks described by automata may be broken into sub-tasks executable by individuals. \cite{Djeumou2020, cubuktepe2020policy} provide methods to synthesize control policies for large-scale multi-agent systems with temporal logic specifications. However, all of these works assume a known model of the environment, differing from the learning setting of this paper.

Several works have explored combining formal methods and MARL. Recently, the authors of \cite{leon2020extended} present Extended Markov Games; a mathematical model allowing multiple agents to concurrently learn to satisfy multiple non-Markovian task specifications. The authors of \cite{muniraj2018enforcing} use minimax deep q-learning to solve zero-sum adversarial games in which the reward is encoded by signal temporal logic specifications. However, to the best of our knowledge, none have yet studied how automata-based task descriptions can be used to decompose a cooperative problem in a way that allows the agents to learn in the absence of their teammates.

The MARL literature is rich \cite{zhang2019multi, non_stationary_MARL_survey, hernandez2019survey, IQLSurveyMatingon}. A popular popular approach is IQL \cite{MARLIndependentvsCooperativeAgentsTan}; each agent learns independently and treats its teammates as part of the environment. \cite{guestrin2002coordinated, kok2005using, van2016coordinated} decompose cooperative tasks by factoring the joint q-function into components. 

More recently, centralized training decentralized execution (CTDE) paradigm algorithms, such as QMIX \cite{rashid2018qmix}, have shown empirical success in cooperative deep MARL problems \cite{sunehag2018value, rashid2018qmix, son2019qtran, mahajan2019maven}. These algorithms enforce the assumption that the team's q-function can be decomposed in a way that allows the agents to make decentralized decisions. CTDE algorithms have been shown to perform well on challenging tasks \cite{samvelyan2019starcraft}. However, the centralized training required by these methods can be sample inefficient, as observed in the results of our centralized approach to learning with reward machines, shown in Figure \ref{fig:two_agent_rendezvous_results}. Our work studies how take advantage of tasks in which only sparse interactions are required between the agents, to avoid simultaneous training of the agents altogether. 

Our work, which examines cooperative tasks that have sparse and temporally delayed rewards, is most closely related to hierarchical approaches to MARL. In particular, \cite{HMARL2001, HMARL2006} use task hierarchies to decompose the multi-agent problem. By learning cooperative strategies only in terms of the sub-tasks at the highest levels of the hierarchy, agents learn to coordinate much more efficiently than if they were sharing information at the level of primitive state-action pairs. More recently, \cite{hierarchicalDeepMARL} empirically demonstrates the effectiveness of a \textit{deep} hierarchical approach to certain cooperative MARL tasks. A key difference between task hierarchies and RMs, is that RMs explicitly encode the temporal ordering of the high-level sub-tasks. It is by taking advantage of this information that we are able to break a team's task into components, and to train the agents independently while guaranteeing that they are learning behavior appropriate for the original problem. Conversely, in a hierarchical approach, the agents must still learn to coordinate at the level of sub-tasks. Thus, the learning problem remains inherently multi-agent, albeit simplified.

In this work, we assume the task RM is known, and present a method to use its decomposition to efficiently solve the MARL problem. The authors of \cite{xu2019joint, icarte2019learning} demonstrate that, in the single-agent setting, RMs can be learned from experience, removing the assumption of the RM being known \textit{a priori} by the learner. This presents an interesting direction for future research: how may agents learn, in a multi-agent setting, RMs encoding either the team's task or projected local tasks. Furthermore, \cite{icarte2018using, rm_journal_version} demonstrate in the single-agent setting that RMs may be applied to continuous environments by replacing tabular q-learning with double deep q-networks \cite{doubldDeepQLearning}. This extension to more complex environments also readily applies to our work, which decomposes multi-agent problems into collections of RMs describing single-agent tasks. 
\section{Conclusions}

In this work, we propose a reward machine (RM) based task representation for cooperative multi-agent reinforcement learning (MARL). The representation allows for a team's task to be decomposed into sub-tasks for individual agents. We accordingly propose a decentralized q-learning algorithm that effectively reduces the MARL problem to a collection of single-agent problems. Experimental results demonstrate the efficiency and scalability of the proposed algorithm, which learns successful team policies even when the baseline algorithms do not converge within the allowed training period. This work demonstrates how well-suited RMs are to the specification and decomposition of MARL problems, and opens interesting directions for future research.

\begin{acks}
This work was supported in part by ARO W911NF-20-1-0140, DARPA D19AP00004, and ONR N00014-18-1-2829.
\end{acks}



\bibliographystyle{ACM-Reference-Format} 
\bibliography{bibliography}

\pagebreak

\newgeometry{left=1in, right=1in}
\onecolumn
\begin{center}
\huge \textbf{Reward Machines for Cooperative Multi-Agent Reinforcement Learning}\\
\LARGE Supplementary Material
\end{center}
\section{Parallel Composition and Bisimulation}

\begin{definition}
\label{def:parallel_composition}
(Parallel composition of RMs) The \textit{parallel composition} of two reward machines $\RM_i = \langle \RMStates_i, \RMInitialState^i, \RMEvents_i, \RMTransition_i, \RMOutput_i, \RMFinalStates_i \rangle$, $i = 1,2$ is defined as $\RM_1 \parallel \RM_2 = \langle \RMStates, \RMInitialState, \RMEvents, \RMTransition, \RMOutput, \RMFinalStates \rangle$ where

\begin{itemize}
    \item The set of states is defined as $\RMStates = \RMStates_1 \times \RMStates_2$.
    \item The initial state is $\RMInitialState = (\RMInitialState^1, \RMInitialState^2)$.
    \item The set of events is defined as $\RMEvents = \RMEvents_1 \cup \RMEvents_2$.
    \item $\RMTransition$ is defined such that for every $(\RMCommonState_1, \RMCommonState_2) \in \RMStates_1 \times \RMStates_2$ and for every event $\RMCommonEvent \in \RMEvents$, 
    \[\begin{aligned}
        \RMTransition((\RMCommonState_1, \RMCommonState_2), \RMCommonEvent) = \begin{cases} 
          (\RMTransition_1(\RMCommonState_1, \RMCommonEvent), \RMTransition(\RMCommonState_2, \RMCommonEvent)), & \textrm{if } \; \RMTransition_1(\RMCommonState_1, \RMCommonEvent), \; \RMTransition_2(\RMCommonState_2, \RMCommonEvent) \textrm{ defined}, \; \RMCommonEvent \in \RMEvents_1 \cap \RMEvents_2 \\
          
          (\RMTransition_1(\RMCommonState_1, \RMCommonEvent), \RMCommonState_2), & \textrm{if } \; \RMTransition_1(\RMCommonState_1, \RMCommonEvent) \textrm{ defined}, \; \RMCommonEvent \in \RMEvents_1 \setminus \RMEvents_2\\
          
          (\RMCommonState_1, \RMTransition_2(\RMCommonState_2, \RMCommonEvent)), & \textrm{if } \; \RMTransition_2(\RMCommonState_2, \RMCommonEvent) \textrm{ defined}, \; \RMCommonEvent \in \RMEvents_2 \setminus \RMEvents_1\\ 
          
          \textrm{undefined}, & \textrm{otherwise}
       \end{cases}
    \end{aligned}\]
    
    \item The set of final states is defined as $\RMFinalStates = \RMFinalStates_1 \times \RMFinalStates_2$
    
    \item The output function $\RMOutput : \RMStates \times \RMStates \to \mathbb{R}$ is defined such that $\RMOutput(\RMCommonState, \RMCommonState') = 1$ if $\RMCommonState_1 \notin \RMFinalStates$, $\RMCommonState_2 \in \RMFinalStates$ and $\RMOutput = 0$ otherwise.
\end{itemize}

By $\parallel_{i=1}^N\RM_i$ we denote the parallel composition of a collection of RMs $\RM_1$, $\RM_2$, ..., $\RM_N$. The parallel composition of more than two RMs is defined using the associative property of the parallel composition operator $\parallel_{i=1}^N \RM_i = \RM_1 \parallel(\RM_2 \parallel(... \parallel (\RM_{N-1} \parallel\RM_N)))$ \cite{cassandras2009introduction}.

\end{definition}

\begin{definition}
(Bisimilarity of Reward Machines) Let $\RM_i = \langle \RMStates_i, \RMInitialState^i, \RMEvents, \RMTransition_i, \RMOutput_i, \RMFinalStates_i \rangle$, $i = 1, 2$, be two RMs. $\RM_1$ and $\RM_2$ are bisimilar, denoted $\RM_1 \cong \RM_2$ if there exists a relation $\Relation \subseteq \RMStates_1 \times \RMStates_2$ with respect to common input alphabet $\RMEvents$ such that

\begin{enumerate}
    \item $(\RMInitialState^1, \RMInitialState^2) \in \Relation$.
    \item For every $(\RMCommonState^1, \RMCommonState^2) \in \Relation$,
    \begin{itemize}
        \item $\RMCommonState^1 \in \RMFinalStates_1$ if and only if $\RMCommonState^2 \in \RMFinalStates_2$
        \item if $\RMTransition_1(\RMCommonState^1, \RMCommonEvent) = {\RMCommonState^1}'$ for some $\RMCommonEvent \in \RMEvents$, then there exists ${\RMCommonState^2}' \in \RMStates_2$ such that $\RMTransition_2(\RMCommonState^2, \RMCommonEvent) = {\RMCommonState^2}'$ and $({\RMCommonState^1}', {\RMCommonState^2}') \in \Relation$.
        \item if $\RMTransition_2(\RMCommonState^2, \RMCommonEvent) = {\RMCommonState^2}'$ for some $\RMCommonEvent \in \RMEvents$, then there exists ${\RMCommonState^1}' \in \RMStates_1$ such that $\RMTransition_1(\RMCommonState^1, \RMCommonEvent) = {\RMCommonState^1}'$ and $({\RMCommonState^1}', {\RMCommonState^2}') \in \Relation$.
    \end{itemize}
\end{enumerate}
\end{definition}
\section{Local Labeling Functions}
\label{sec:supp_labeling_functions}

\subsection{Constructive Definition of Local Labeling Function}
\begin{definition}
\label{def:possible_team_state_set}
(Consistent set of team states) Given agent $i$'s local state $\SGCommonState_i \in \SGStates_i$ and the state $\RMCommonState^i \in \RMStates_i$ of its projected RM, we define the set $\possibleTeamStates_{\SGCommonState_i, \RMCommonState^i} \subseteq \SGJointStates \times \RMStates$ as

\[\possibleTeamStates_{\SGCommonState_i, \RMCommonState^i} = \{(\SGCommonJointState, \RMCommonState) | \SGCommonJointState \in \SGStates_1 \times \SGStates_2 \times ... \times \{\SGCommonState_i\} \times ... \times \SGStates_\SGNumAgents, \; \RMCommonState \in \RMCommonState_i \} \]
\end{definition}

For clarity, recall that any state $\RMCommonState_i$ in the projected RM $\RM_i$ corresponds to a collection of states from $\RMStates$. Thus any pair $(\SGCommonJointState, \RMCommonState) \in \possibleTeamStates_{\SGCommonState_i, \RMCommonState^i}$ corresponds to a team environment state $\SGCommonJointState \in \SGJointStates$ such that agent $i$ is in local state $\SGCommonState_i$ and to a team RM state $\RMCommonState \in \RMStates$ belonging to the collection $\RMCommonState_i \in \RMStates_i$. In words, $\possibleTeamStates_{\SGCommonState_i, \RMCommonState_i}$ is the set of all pairs of team environment states and team RM states that are consistent with the local environment state $\SGCommonState_i$ and projected RM state $\RMCommonState_i$.

\begin{definition}
\label{def:local_labeling_function}
(Local labeling function) Given a team labeling function $\RMLabelingFunction$ and a local event set $\RMEvents_i$, we define the local labeling function $\RMLabelingFunction_i$, for all $(\SGCommonState_i, \RMCommonState_i) \in \SGStates_i \times \RMStates_i$, as

\[ 
 \begin{aligned}
\RMLabelingFunction_i(\SGCommonState_i, \RMCommonState_i) =
  \begin{cases}
       \RMLabelingFunction(\SGCommonJointState, \RMCommonState) \cap \RMEvents_i, & \text{if } \exists (\SGCommonJointState, \RMCommonState) \in \possibleTeamStates_{\SGCommonState_i, \RMCommonState_i} \textrm{ such that } \RMLabelingFunction(\SGCommonJointState, \RMCommonState) \cap \RMEvents_i \neq \emptyset \\
       
        \emptyset, & \text{if } \forall (\SGCommonJointState, \RMCommonState) \in \possibleTeamStates_{\SGCommonState_i, \RMCommonState_i}, \; \RMLabelingFunction(\SGCommonJointState, \RMCommonState) \cap \RMEvents_i = \emptyset.  
  \end{cases}
  \end{aligned}
\]

\end{definition}

We define the following three conditions to ensure that $\RMLabelingFunction_i : \SGStates_i \times \RMStates_i \rightarrow \localLabelingFunctionOutput$ is well defined, and that the collection $\RMLabelingFunction_1$,..., $\RMLabelingFunction_N$ satisfy Definition \ref{def:labeling_function_decomposability}. These conditions must hold for $i = 1,2,...,N$.

\begin{enumerate}
    \item To ensure $\RMLabelingFunction_i$ maps to singleton sets of local events, or to the empty set, we must have that for any $(\SGCommonJointState, \RMCommonState) \in \SGJointStates \times \RMStates$, $|\RMLabelingFunction(\SGCommonJointState, \RMCommonState)\cap \RMEvents_i| \leq 1$.
    \item Given any input $(\SGCommonState_i, \RMCommonState_i)$, to ensure that $\RMLabelingFunction_i(\SGCommonState_i, \RMCommonState_i)$ has a unique output, there must exist a unique $\RMCommonEvent \in \RMEvents_i$ such that $\RMLabelingFunction(\SGCommonJointState, \RMCommonState) \cap \RMEvents_i = \{\RMCommonEvent\}$ or $\RMLabelingFunction(\SGCommonJointState, \RMCommonState) \cap \RMEvents_i = \emptyset$ for every  $(\SGCommonJointState, \RMCommonState) \in \possibleTeamStates_{\SGCommonState_i, \RMCommonState_i}$.
    \item For every event $\RMCommonEvent \in \RMEvents$, if $\RMCommonEvent \notin \RMLabelingFunction(\SGCommonJointState, \RMCommonState)$ then there must exists some local event set $\RMEvents_i$ containing $\RMCommonEvent$ such that $\RMLabelingFunction(\Tilde{\SGCommonJointState},\Tilde{\RMCommonState})\cap \RMEvents_i = \emptyset$ for every $(\Tilde{\SGCommonJointState},\Tilde{\RMCommonState}) \in \possibleTeamStates_{\SGCommonState_i, \RMCommonState_i}$. Here $\SGCommonState_i$ is the local state of agent $i$ consistent with team state $\SGCommonJointState$, and $\RMCommonState_i$ is the state of $\RM_i$ containing $\RMCommonState$. 
\end{enumerate}

The first condition ensures that $\RMLabelingFunction$ only returns a single event from the local event set of each agent for any step of the environment. The second condition ensures that given the local state pair $(\SGCommonState_i, \RMCommonState_i)$ of agent $i$, the same event singleton $\{\RMCommonEvent\} \in \localLabelingFunctionOutput$ is returned by $\RMLabelingFunction_i$ regardless of the states of the teammates of agent $i$. The final condition ensures that $\RMCommonEvent$ is returned by $\RMLabelingFunction$ only if it is also returned by $\RMLabelingFunction_i$ for every $i \in I_{\RMCommonEvent}$.

\subsection{Labeling Trajectories of Environment States}
\label{sec:supp_labeling_trajectories_of_states}
For any finite trajectory $\SGCommonJointState_0 \SGCommonJointState_1 ... \SGCommonJointState_k$ of team environment states, we may use the reward machine $\RM$ and labeling function $\RMLabelingFunction$ to define a sequence of triplets $(\SGCommonJointState_0, \RMCommonState_0, l_0)(\SGCommonJointState_1, \RMCommonState_1, l_1)... (\SGCommonJointState_k, \RMCommonState_k, l_k)$ and a string of events $\RMLabelingFunction(\SGCommonJointState_0 ... \SGCommonJointState_k) \in \RMEvents^*$. Here $\RMCommonState_t$ is the state of team RM $\RM$ at time $t$ and $l_t$ is the output of the labeling function $\RMLabelingFunction$ at time $t$. Algorithm \ref{alg:construct_team_triplet_sequence} details the constructive definition of the sequence.

\begin{algorithm}[h]
    \DontPrintSemicolon 
    \KwIn{$\SGCommonJointState_0 \SGCommonJointState_1 ... \SGCommonJointState_k, \RM, \RMLabelingFunction$}
    \KwOut{$(\SGCommonJointState_0, \RMCommonState_0, l_0)(\SGCommonJointState_1, \RMCommonState_1, l_1)...(\SGCommonJointState_k, \RMCommonState_k, l_k)$, $\RMLabelingFunction(\SGCommonJointState_0 \SGCommonJointState_1 ... \SGCommonJointState_k)$}
    $\RMCommonState_0 \gets \RMInitialState$, $l_0 \gets \emptyset$, $\RMEventSequence \gets emptyString()$\;
    \For{$t = 1$ \textbf{to} $k-1$} {
        $\RMCommonState_{temp} \gets \RMCommonState_t$\;
        \For{$\RMCommonEvent \in l_t$}{
            $\RMCommonState_{temp} \gets \RMTransition(\RMCommonState_{temp}, \RMCommonEvent)$\;
            $\RMEventSequence \gets append(\RMEventSequence, \RMCommonEvent)$
        }
        $\RMCommonState_{t+1} \gets \RMCommonState_{temp}$\;
        $l_{t+1} \gets \RMLabelingFunction(\SGCommonJointState_{t+1}, \RMCommonState_t)$\;
    }
    $\RMLabelingFunction(\SGCommonJointState_0 \SGCommonJointState_1 ... \SGCommonJointState_k) \gets \RMEventSequence$\;
    \Return{$(\SGCommonJointState_0, \RMCommonState_0, l_0)(\SGCommonJointState_1, \RMCommonState_1, l_1)...(\SGCommonJointState_k, \RMCommonState_k, l_k)$, $\RMLabelingFunction(\SGCommonJointState_0 \SGCommonJointState_1 ... \SGCommonJointState_k)$}\;
    \caption{Construct sequence of RM states and labeling function outputs.}
    \label{alg:construct_team_triplet_sequence}
\end{algorithm}

Similarly, for a collection of local environment states $\{\SGCommonState_0^i \SGCommonState_1^i ... \SGCommonState_k^i\}_{i=1}^N$ and the collections of projected reward machines $\{\RM_i\}_{i=1}^N$ and local labeling functions $\{\RMLabelingFunction_i\}_{i=1}^N$ we may define the collection of sequences $\{(\SGCommonState_0^i, \RMCommonState_0^i, \Tilde{l}_0^i)(\SGCommonState_1^i, \RMCommonState_1^i, \Tilde{l}_1^i) ... (\SGCommonState_k^i, \RMCommonState_k^i, \Tilde{l}_k^i)\}_{i=1}^N$ as well as a collection of strings of events $\{\RMLabelingFunction_i(\SGCommonState_0^i ... \SGCommonState_k^i)\}_{i=1}^N$. Here, $\RMCommonState_t^i$ is the state of projected RM $\RM_i$ and $\Tilde{l}_t^i$ is the output of local labeling function $\RMLabelingFunction_i$ at time $t$, after synchronization with collaborating teammates on shared events. Algorithm \ref{alg:construct_sync_triplet_sequence} details the constructive definition of the sequence.

\begin{algorithm}
    \DontPrintSemicolon 
    \KwIn{$\{\SGCommonState_0^i \SGCommonState_1^i ... \SGCommonState_k^i\}_{i=1}^N, \{\RM_i\}_{i=1}^N, \{\RMLabelingFunction_i\}_{i=1}^N$}
    \KwOut{$\{(\SGCommonState_0^i, \RMCommonState_0^i, \Tilde{l}_0^i)(\SGCommonState_1^i, \RMCommonState_1^i, \Tilde{l}_1)...(\SGCommonState_k^i, \RMCommonState_k^i, \Tilde{l}_k^i)\}_{i=1}^N$, $\{\RMLabelingFunction_i(\SGCommonState_0^i\SGCommonState_1^i ... \SGCommonState_k^i))\}_{i=1}^N$}
    \For{$i = 1$ \textbf{to} $N$}{$\RMCommonState_0^i \gets \RMInitialState^i$, $\Tilde{l}_0^i \gets \emptyset$, $\RMEventSequence_i \gets emptyString()$\;}
    \For{$t = 1$ \textbf{to} $k-1$} {
    
        \For{$i = 1$ \textbf{to} $N$}{
            $\RMCommonState_{temp}^i \gets \RMCommonState_t^i$\;
            \For{$\RMCommonEvent \in \Tilde{l}_t^i$}{
                $\RMCommonState_{temp}^i \gets \RMTransition_i(\RMCommonState_{temp}^i, \RMCommonEvent)$\;
                $\RMEventSequence_i \gets append(\RMEventSequence_i, \RMCommonEvent)$\;}
        $\RMCommonState_{t+1}^i \gets \RMCommonState_{temp}^i$\;
        $l_{t+1}^i \gets \RMLabelingFunction_i(\SGCommonState_{t+1}^i, \RMCommonState_t^i)$\;
        }
        \For{$i = 1$ \textbf{to} $N$}{
            $I_\RMCommonEvent \gets getCollaboratingAgents(l_{t+1}^i)$\;
            $\Tilde{l}_{t+1}^i \gets \bigcap_{j \in I_e} l_{t+1}^j$, (Synchronization step)\;
            
        }
    }
    \For{$i=1$ \textbf{to} $N$}{$\RMLabelingFunction_i(\SGCommonState_0^i\SGCommonState_1^i ... \SGCommonState_k^i)) \gets \RMEventSequence_i$\;}
    \Return{$\{(\SGCommonState_0^i, \RMCommonState_0^i, \Tilde{l}_0^i)(\SGCommonState_1^i, \RMCommonState_1^i, \Tilde{l}_1)...(\SGCommonState_k^i, \RMCommonState_k^i, \Tilde{l}_k^i)\}_{i=1}^N$}, $\{\RMLabelingFunction_i(\SGCommonState_0^i\SGCommonState_1^i ... \SGCommonState_k^i))\}_{i=1}^N$\;
    \caption{Construct sequence of projected RM states and synchronized labeling function outputs.}
    \label{alg:construct_sync_triplet_sequence}
\end{algorithm}

We note that the sequences $l_0 l_1 ... l_k \in (2^{\RMEvents})^*$ and $\Tilde{l}_0^i \Tilde{l}_1^i ... \Tilde{l}_k^i \in (2^{\RMEvents_i})^*$ of labeling function outputs are sequences of \textit{sets} of events. Given our assumption that $\RMLabelingFunction_i(\SGCommonState_{t+1}^i, \RMCommonState_t^i)$ outputs at most one event per time step, $|\Tilde{l}_t^i| \leq 1$ for every $t$. This assumption corresponds to the idea that only one event may occur to an individual agent per time step.

However, the team labeling function $\RMLabelingFunction$ may return multiple events at a given time step, corresponding to all the events that occur concurrently to separate agents. Because $\RM$ is assumed to be equivalent to the parallel composition of a collection of component RMs, all such concurrent events are interleaved \cite{baier2008principles}; the order in which they cause transitions in $\RM$ doesn't matter. So, Given $l_1 l_2 ... l_k \in (2^{\RMEvents})^*$, the corresponding sequence of events $\RMLabelingFunction(\SGCommonJointState_0 \SGCommonJointState_1 ... \SGCommonJointState_k)\in\RMEvents^*$ is constructed by iteratively appending elements in $l_t$ to $\RMLabelingFunction(\SGCommonJointState_0 \SGCommonJointState_1 ... \SGCommonJointState_k)$, as detailed in Algorithm \ref{alg:construct_sync_triplet_sequence}. Similarly, from sequence $\syncOutput_1^i \syncOutput_2^i ... \syncOutput_k^i \in (2^{\RMEvents_i})^*$ we construct the sequence of local events $\RMLabelingFunction_i(\SGCommonState_0^i \SGCommonState_1^i ... \SGCommonState_k^i) \in \RMEvents_i^*$. 
\newpage
\section{Extended Proof of Theorem 2}

\textbf{Theorem 2.} \textit{Given $\RM$, $\RMLabelingFunction$, and $\RMEvents_1$, ..., $\RMEvents_N$, suppose the bisimilarity condition from Theorem \ref{thm:rm_decomposability} holds. Furthermore, assume $\RMLabelingFunction$ is decomposable with respect to $\RMEvents_1$, ... $\RMEvents_N$ with the corresponding local labeling functions $\RMLabelingFunction_1$, ..., $\RMLabelingFunction_N$. Let $\SGCommonJointState_0...\SGCommonJointState_k$ be a sequence of joint environment states and $\{s_0^i...s_k^i\}_{i=1}^N$ be the corresponding sequences of local states. If the agents synchronize on shared events, then $\RM(\RMLabelingFunction(\SGCommonJointState_0...\SGCommonJointState_k)) = 1$ if and only if $\RM_i(\RMLabelingFunction_i(\SGCommonState_0^i...\SGCommonState_k^i)) = 1$ for all $i = 1,2,...,N$. Otherwise $\RM(\RMLabelingFunction(\SGCommonJointState_0...\SGCommonJointState_k)) = 0$ and $\RM_i(\RMLabelingFunction_i(\SGCommonState_0^i...\SGCommonState_k^i)) = 0$ for all $i = 1,2,...,N$.}

\begin{proof} Recall from \S \ref{sec:local_labeling_functions} it is sufficient to show that for every $i=1,2,...,N$ and every $t = 1,2,...,k$, $l_t \cap \RMEvents_i = \Tilde{l}_t^i$. Here, $l_t$ is the output of labeling function $\RMLabelingFunction$ at time $t$ and $\Tilde{l}_t^i$ is the synchronized output of local labeling function $\RMLabelingFunction_i$ (\S \ref{sec:supp_labeling_trajectories_of_states}).

At time $t=0$, $l_0$ and $\Tilde{l}_0^i$ are defined to be empty sets: no events have yet occurred (\S \ref{sec:supp_labeling_trajectories_of_states}). So, trivially $l_0 \cap \RMEvents_i = \Tilde{l}_0^i$. Furthermore, $\RMInitialState \in \RMInitialState^i$ by definition of the projected initial states $\RMInitialState^i$. Recall that $\RMInitialState \in \RMStates$ is the initial state of RM $\RM$, and $\RMInitialState^i \in \RMStates_i$ is the initial state of projected RM $\RM_i$.

Now suppose that at some arbitrary time $t$, $l_t \cap \RMEvents_i = \Tilde{l}_t^i$ and $\RMCommonState_t \in \RMCommonState_t^i$ for every $i = 1,...,N$. We wish to show that this implies $l_{t+1}\cap \RMEvents_i = \Tilde{l}_{t+1}^i$ and $\RMCommonState_{t+1} \in \RMCommonState_{t+1}^i$.

\underline{Showing $l_{t+1}\cap \RMEvents_i = \Tilde{l}_{t+1}^i$:} Recall our assumption that for any pair $(\SGCommonState_{t+1}^i, \RMCommonState_t^i)$, $\RMLabelingFunction_i(\SGCommonState_{t+1}^i, \RMCommonState_t^i)$ outputs only one event, corresponding to the idea that only one event may occur to an individual agent per time step. Thus for any $i=1,...,N$, $l_{t+1}\cap\RMEvents_i = \RMLabelingFunction(\SGCommonJointState_{t+1}, \RMCommonState_{t})\cap \RMEvents_i$ is either equal to $\{\RMCommonEvent\}$ for some $\RMCommonEvent \in \RMEvents_i$ or it is equal to the empty set.

\begin{itemize}
    \item If $\RMLabelingFunction(\SGCommonJointState_{t+1}, \RMCommonState_t) \cap \RMEvents_i = \{\RMCommonEvent\}$, then $\RMLabelingFunction_j(\SGCommonState_{t+1}^j, \RMCommonState_t^j) = \{\RMCommonEvent\}$ for every $j \in I_{\RMCommonEvent}$ by definition of $\RMLabelingFunction$ being decomposable with corresponding local labeling functions $\RMLabelingFunction_1$, $\RMLabelingFunction_2$, ..., $\RMLabelingFunction_N$. Thus $\Tilde{l}_{t+1}^i = \bigcap_{j\in I_\RMCommonEvent} \RMLabelingFunction_j(\SGCommonState_{t+1}^j, \RMCommonState_t^j) = \{\RMCommonEvent\}$. Recall that $I_e = \{i | \RMCommonEvent \in \RMEvents_i\}$.
    \item Suppose instead that $\RMLabelingFunction(\SGCommonJointState_{t+1}, \RMCommonState_t) \cap \RMEvents_i = \emptyset$.
    \begin{itemize}
        \item If $\RMLabelingFunction_i(\SGCommonState_{t+1}^i, \RMCommonState_t^i) = \emptyset$, then clearly $\Tilde{l}_{t+1}^i = \emptyset$.
        \item If $\RMLabelingFunction_i(\SGCommonState_{t+1}^i, \RMCommonState_t^i) = \{\RMCommonEvent\}$ for some $\RMCommonEvent \in \RMEvents_i$, then there exists some $j\in I_e$ such that $\RMCommonEvent \notin \RMLabelingFunction_j(\SGCommonState_{t+1}^j, \RMCommonState_{t}^j)$ by the definition of $\RMLabelingFunction$ being decomposable. Thus $\Tilde{l}_{t+1}^i = \emptyset$.
    \end{itemize}
\end{itemize}

\underline{Showing $\RMCommonState_{t+1} \in \RMCommonState_{t+1}^i$:} We begin by using the knowledge that $l_{t+1} \cap \RMEvents_i = \Tilde{l}_{t+1}^i$, and we again proceed by considering the two possible cases. 

\begin{itemize}
    \item If $l_{t+1} \cap \RMEvents_i = \Tilde{l}_{t+1}^i = \emptyset$, then the projected RM $\RM_i$ will not undergo a transition and so $\RMCommonState_{t+1}^i = \RMCommonState_t^i$. We know that $\RMCommonState_t \in \RMCommonState_{t}^i$, and that RM $\RM$ doesn't undergo any transition triggered by an event in $\RMEvents_i$. So, by definition the projected states of $\RM_i$, we have $\RMCommonState_{t+1} \in \RMCommonState_{t}^i$. Thus, $\RMCommonState_{t+1} \in \RMCommonState_{t+1}^i$.
    \item Now consider the case $l_{t+1} \cap \RMEvents_i = \Tilde{l}_{t+1}^i = \{\RMCommonEvent\}$. Projected RM $\RM_i$ will transition to a new state $\RMCommonState_{t+1}^i$ according to $\RMTransition_i(\RMCommonState_t^i, \RMCommonEvent)$. Assume, without loss of generality, that $l_{t+1}$ contains events outside of $\RMEvents_i$ which trigger transitions both before and after the transition triggered by $\RMCommonEvent$. That is, suppose $a, b \in l_{t+1} \setminus \RMEvents_i$ and that $\RM$ undergoes the following sequence of transitions: $\RMCommonState' = \RMTransition(\RMCommonState_t, a)$, $\Tilde{\RMCommonState} = \RMTransition(\RMCommonState', \RMCommonEvent)$, and finally $\RMCommonState_{t+1} = \RMTransition(\Tilde{\RMCommonState}, b)$. Because $\RMCommonState_t \in \RMCommonState_{t}^i$ and $a \notin \RMEvents_i$, we know $\RMCommonState' \in \RMCommonState_{t}^i$. Because $\RMCommonEvent \in \RMEvents_i$, $\Tilde{\RMCommonState} \in \Tilde{\RMCommonState}^i$ for some $\Tilde{\RMCommonState}^i \in \RMStates_i$ not necessarily equal to $\RMCommonState_{t}^i$. Finally, because $b \notin \RMEvents_i$, we have $\RMCommonState_{t+1} \in \Tilde{\RMCommonState}^i$. So, $\RMCommonState_{t+1} \in \Tilde{\RMCommonState}^i$ for some projected state $\Tilde{\RMCommonState}^i$ such that there exist states $\RMCommonState' \in \RMCommonState_{t}^i$ and $\Tilde{\RMCommonState} \in \Tilde{\RMCommonState}^i$ such that $\Tilde{\RMCommonState} = \RMTransition(\RMCommonState', \RMCommonEvent)$ where $\RMCommonEvent \in \RMEvents_i$. By our definition of $\RMTransition_i$ and our enforcement of it being a deterministic transition function, $\RMCommonState_{t+1}^i$ is the unique such state in $\RMStates_i$, which implies $\Tilde{\RMCommonState}^i = \RMCommonState_{t+1}^i$. Thus $\RMCommonState_{t+1} \in \RMCommonState_{t+1}^i$.
\end{itemize}

By induction we conclude the proof.

\end{proof}
\section{Rendezvous Experimental Domain}

\label{sec:supp_experiments}

Figure \ref{fig:ten_agent_rendezvous_gridworld} shows the gridworld environment for the ten-agent rendezvous task described in \S \ref{sec:experimental_results}. To successfully complete the task, all agents must simultaneously occupy the rendezvous location before proceeding to their respective goal locations. 

\begin{figure}[h!]
    \centering
    \includegraphics[width=0.5\columnwidth]{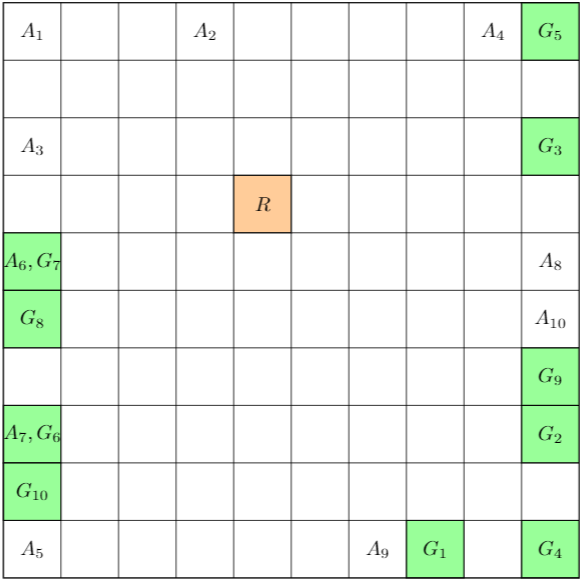}
    \caption{The initial position of agent $i$ is marked $A_i$. The common rendezvous location for all agents is marked $R$ and is highlighted in orange. The goal location for agent $i$ is marked $G_i$ and highlighted in green.}
    \label{fig:ten_agent_rendezvous_gridworld}
\end{figure}

\end{document}